\documentclass[journal,twocolumn,final]{IEEEtran}

\usepackage{comment}
\usepackage{amsfonts,amssymb,amsmath,mathtools}
\usepackage{dsfont}
\usepackage{graphicx}
\usepackage{bm}
\usepackage{cite}
\usepackage{bigstrut}
\usepackage{float}
\usepackage{balance}
\usepackage{lipsum}
\usepackage{psfrag}
\usepackage{xcolor}
\usepackage{xfrac}
\usepackage{mathrsfs}
\usepackage{tabularx,array}
\usepackage{microtype}

\usepackage{algorithm}
\usepackage{algpseudocode}
\usepackage[caption=false,font=footnotesize]{subfig}

\usepackage{tikz}
\usetikzlibrary{
    arrows,
    decorations.pathreplacing,
    fadings,
    fit
}

\usepackage{pgfplots}
\usepackage{pgfplotstable}
\pgfplotsset{compat=1.18}
\usepgfplotslibrary{patchplots}

\usepackage{url}
\usepackage{hyperref}
\hypersetup{hidelinks}
\newtheorem{theorem}{Theorem}
\newtheorem{lemma}{Lemma}
\newenvironment{proof}
{\begin{IEEEproof}}
{\end{IEEEproof}}
\newtheorem{proposition}{Proposition}
\newtheorem{remark}{Remark}

\newcommand{\CN}{\mathcal{CN}}
\newcommand{\norm}[1]{\left\lVert#1\right\rVert}

\newcommand{\TV}{\operatorname{TV}}

\allowdisplaybreaks[1]
\IEEEoverridecommandlockouts

\begin{document}
\title{Joint Probabilistic Shaping and Symbol-Level Precoding for Covert LEO Satellite Downlinks}
\author{Pigi P. Papanikolaou,~\IEEEmembership{Graduate Student Member,~IEEE}, Dimitrios Bozanis,~\IEEEmembership{Graduate Student Member,~IEEE}, \\Sotiris A. Tegos,~\IEEEmembership{Senior Member,~IEEE}, Christos Masouros,~\IEEEmembership{Fellow,~IEEE},\\ and George K. Karagiannidis,~\IEEEmembership{Fellow,~IEEE}

\thanks{P. P. Papanikolaou, D. Bozanis, S. A. Tegos and G. K. Karagiannidis are with the Department of Electrical and Computer Engineering, Aristotle University of Thessaloniki, 54124 Thessaloniki, Greece (e-mails: pigipapa@auth.gr, dimimpoz@ece.auth.gr, tegosoti@auth.gr, geokarag@auth.gr).}
\thanks{C. Masouros is with the Department of Electronic and Electrical Engineering, University College London, London, UK (e-mail: c.masouros@ucl.ac.uk).}

\vspace{-5mm}

}
\maketitle

\begin{abstract}
The wide footprint of low-Earth-orbit (LEO) satellite downlinks exposes the mere occurrence of a transmission to any receiver within it. This paper studies covert multiuser LEO downlinks, where a passive warden at an unknown but fixed ground location accumulates observations as the satellite moves. We jointly optimize the precoding and the occurrence probabilities of the multiuser symbol combinations to provide covertness against such wardens. Each phase-shift keying (PSK) symbol vector is decomposed into a relative pattern and a common rotation, so that shaping the pattern probabilities leaves the symbols of every legitimate user uniformly distributed. We develop probabilistically shaped symbol-level precoding and probabilistically shaped block-level precoding. We derive the exact chi-square divergence of the warden's Gaussian-mixture observation over the complete session and, by bounding the leakage over a continuous location uncertainty region, obtain certificates that guarantee covertness throughout that region. The certified covertness level is minimized under robust constructive-interference, power, and entropy constraints through an alternating optimization whose updates are equivalent to convex conic problems, and the optimized probabilities follow a Maxwell--Boltzmann-type law. Numerical results show that shaping substantially reduces the certified covertness level of both precoding structures, particularly under large location uncertainty, low elevation, and scarce power.
\end{abstract}

\begin{IEEEkeywords}
Covert communication, LEO satellite, symbol-level precoding, probabilistic shaping, chi-square divergence, optimization.
\end{IEEEkeywords}

\section{Introduction}
\label{Intro}

\IEEEPARstart{L}{ow-Earth-orbit} (LEO) satellite systems are becoming a key component of non-terrestrial networks, extending connectivity beyond terrestrial infrastructure and supporting broadband access, remote monitoring, and Internet-of-Things services \cite{Kodheli2021,Lin2021}. Their integration with cellular networks has motivated extensive work on satellite propagation, mobility, and radio-access procedures \cite{3GPP38811}. However, the wide footprint of a satellite downlink also exposes the transmitted signal to any receiver within it. In applications where the mere occurrence of communication reveals sensitive activity, protecting the information content is not enough, and the transmission itself must remain undetected. Covert communication addresses this requirement by limiting a passive warden's ability to distinguish transmission from silence \cite{Yan2019}.

The detection capability of the warden is determined by the statistics of its observations. In a multiuser LEO downlink, these statistics depend on two design choices. The precoder determines how much of each transmitted signal leaks toward the warden, while the occurrence probabilities of the multiuser symbol combinations determine how often each leakage level is observed. The two choices are coupled, since a symbol combination that requires little power to serve the legitimate users may produce substantial leakage toward an unintended location. The satellite setting adds two more features. The warden remains at one ground position while the satellite moves, so it can combine observations collected under different geometries over the scheduled session. Moreover, the line-of-sight (LoS) channel is largely determined by the satellite trajectory and the ground position, so the leakage can be bounded over a region of candidate warden locations even when the exact location is unknown. A covert design should therefore account jointly for the precoder, the symbol statistics, and the evidence accumulated over the session.

\subsection{State of the Art}

The information-theoretic foundations of covert communication were established through the square-root law for additive white Gaussian noise (AWGN) channels \cite{Bash2013}. Subsequent works characterized covert communication through channel resolvability \cite{Bloch2016} and derived the fundamental scaling constants for discrete memoryless and AWGN channels \cite{Wang2016}. These results show that covertness is governed by the distinguishability of the warden's observation distributions, and thus by both the statistics of the transmitted signal and the observation duration.

Building on these foundations, practical designs reduce this distinguishability mainly by introducing uncertainty at the warden. The effect of an unknown background-noise power was examined in \cite{Goeckel2016,He2017}, covert transmission aided by an uninformed jammer was studied in \cite{Sobers2017}, and artificial noise (AN) generated by spatially distributed cooperative nodes or by a full-duplex receiver was considered in \cite{Soltani2018,Shahzad2018}. Covertness requirements have also been incorporated into the resource optimization of learning systems \cite{Papanikolaou2025}. In LEO networks, power control and co-channel interference were used to balance covertness and reliability in multi-tier constellations \cite{Feng2024}, and a robust rate-splitting and jamming design was developed under uncertainty in both the legitimate and warden channels \cite{Jia2025}. Satellite uplinks with uncertain transmitter and UAV-warden locations were investigated in \cite{Peng2026}, accounting for detection opportunities over multiple message fragments, while orthogonal time frequency space (OTFS) modulation was studied under joint time- and delay--Doppler-domain detection \cite{GuoOTFS2026}. These works control power, interference, and additional radiation, but treat the statistics of the information-bearing symbols as given.

A finer control of the transmitted signal is offered by symbol-level precoding (SLP), which designs the transmit vector based on both the channel state information (CSI) and the instantaneous data symbols. Constructive-interference (CI) designs exploit known multiuser interference to enlarge the decision margins of the legitimate users \cite{Masouros2015,Alodeh2015,Haqiqatnejad2018,Li2020}, and SLP has been applied to satellite downlinks under per-antenna power constraints \cite{Spano2019}. Symbol-level control has also been exploited for security. CI-based secure precoding was developed in \cite{Khandaker2018}, while joint CI and destructive-interference (DI) designs which strengthen the signals of the legitimate users and degrade those of potential eavesdroppers were proposed for secure integrated sensing and communication (ISAC) under target-angle uncertainty \cite{Boz} and in the near-field regime \cite{boz2}. Moreover, an SLP design for integrated sensing and covert communication was presented in \cite{WangSLP2026}. These works establish that SLP can shape what an unintended receiver observes. However, they optimize the transmit vector of each symbol combination, while the occurrence probabilities of these combinations remain fixed by the data.

The occurrence probabilities themselves can be controlled through probabilistic shaping. Probabilistic amplitude shaping combines nonuniform symbol distributions with coded modulation \cite{Bocherer2015}, and distribution matching provides an invertible mapping from data bits to sequences with a prescribed composition \cite{Schulte2016}. For covert communication, constellation probabilities and transmit power were jointly optimized to improve the covert rate of a single-antenna link \cite{Ma2022}. However, in these works, shaping is applied to the symbols of a single link and therefore changes the symbol distribution observed by the legitimate receiver.

\subsection{Motivation and Contribution}

Despite these advances, important gaps remain for covert multiuser LEO downlinks. First, precoding-based and shaping-based covert designs have been developed separately. SLP designs optimize the precoding for given symbol statistics, whereas shaping designs modify the symbol distribution of a single link without exploiting the multiuser spatial structure. However, in a multiuser downlink, the joint symbol combinations across users can be shaped, while the symbol distribution of each individual user remains unchanged. To the best of our knowledge, this possibility has not been explored for covert communication, and the coupling between precoding and pattern probabilities has not been captured. Furthermore, with discrete symbols and symbol-dependent precoding, the warden observes a Gaussian mixture whose weights are the pattern probabilities, and whose distinguishability accumulates over blocks with different geometries. To the best of our knowledge, an exact characterization of this finite-session observation law is not available. In addition, the warden's location is usually assumed known, modeled statistically, or represented by a finite set of candidate points, which provides no guarantee between the considered locations. For a fixed warden observing the entire scheduled session, the guarantee must instead hold at every position of a continuous region, after accumulating the evidence of all blocks at that position.

Motivated by the above, this paper develops a joint probabilistic-shaping and precoding framework for covert multiuser LEO downlinks. Each  PSK symbol vector is decomposed into a relative multiuser pattern and a common rotation, so that the pattern probabilities can be shaped while every legitimate user still observes uniformly distributed PSK symbols. The precoding vectors and the pattern probabilities are then jointly designed to minimize a certified bound on the warden's finite-session distinguishability, valid over the entire location uncertainty region, subject to robust CI, power, and entropy requirements. The main contributions are summarized as follows:

\begin{itemize}
\item We propose two covert designs that jointly optimize the precoding and the probabilities of relative multiuser patterns: probabilistically shaped SLP (PS-SLP), with one precoding vector per pattern, and probabilistically shaped block-level precoding (PS-BLP), with a common linear precoder per block. Both preserve uniform PSK marginals at every legitimate user and allocate an input-entropy requirement across the blocks of the session. The CI requirements are guaranteed for every error within a norm-bounded set around the predicted LoS channels of the legitimate users, which accounts for orbit and position errors, imperfect phase and Doppler compensation, and atmospheric effects.

\item We derive the exact chi-square divergence of the warden's weighted Gaussian-mixture observation and extend it to the complete session, for a warden that knows the precoding vectors and the pattern distributions. Since the location of the warden is unknown, we bound its leakage over a continuous location uncertainty region, using only a lower bound on its noise variance and an upper bound on its antenna gain, and obtain finite-session certificates that hold at every position of this region. The tightest certificate is never looser than a simpler alternative bound, and, under exact envelopes, its leading low-leakage term coincides with that of the exact divergence.

\item We develop an alternating optimization in which both the precoding and the probability updates are equivalent to convex conic problems, so that each update is solved exactly. We further show that the optimized probabilities follow an exponential law in a per-pattern leakage-and-power cost, analogous to Maxwell--Boltzmann shaping, which explains how the session entropy is allocated across blocks.

\item Numerical results show that shaping substantially reduces the certified covertness level of both precoding structures, and that its benefit grows with the warden-location uncertainty, at low satellite elevation, and under scarce power. Under larger channel errors at the legitimate users, shaping also extends the range of QoS requirements for which covertness can be certified. The results also quantify the conservatism of the certificates and the session length that can be certified for a target covertness level.
\end{itemize}

\subsection{Structure and Notation}

The remainder of this paper is organized as follows. Section~\ref{sec:model} introduces the system model and the relative-pattern signaling. Section~\ref{sec:covertness} derives the finite-session covertness analysis and the robust certificates. Section~\ref{sec:design} presents the joint design problem and its solution. Section~\ref{Num} reports the numerical results, and Section~\ref{Conc} concludes the paper.

\emph{Notation:} Boldface lowercase and uppercase letters denote vectors and matrices, respectively. The superscripts $(\cdot)^T$, $(\cdot)^H$, and $(\cdot)^*$ denote transpose, Hermitian transpose, and complex conjugation. The sets of real and complex numbers are denoted by $\mathbb R$ and $\mathbb C$, and $\mathrm j=\sqrt{-1}$. The operators $\Re\{\cdot\}$ and $\Im\{\cdot\}$ extract the real and imaginary parts, and $|\cdot|$ and $\|\cdot\|$ denote the scalar modulus and the Euclidean norm. The symbol $\Pr(\cdot)$ denotes probability, and $\mathcal{CN}(0,\sigma^2)$ denotes a zero-mean proper complex Gaussian distribution with variance $\sigma^2$. Natural, base-two, and base-ten logarithms are denoted by $\ln$, $\log_2$, and $\log$, respectively. The entropy $H(\cdot)$ is measured in bits, and $\operatorname{Var}_{\boldsymbol\pi}(\cdot)$ denotes the variance under the pattern distribution $\boldsymbol\pi$. Finally, $\chi^2$ denotes the chi-square divergence and $\operatorname{TV}(\mathbb P_1,\mathbb P_0)$ the total-variation distance between the probability laws $\mathbb P_1$ and $\mathbb P_0$.

\section{System Model and Relative-Pattern Shaping}
\label{sec:model}

\begin{figure}
    \centering
    \includegraphics[width=0.9\columnwidth]{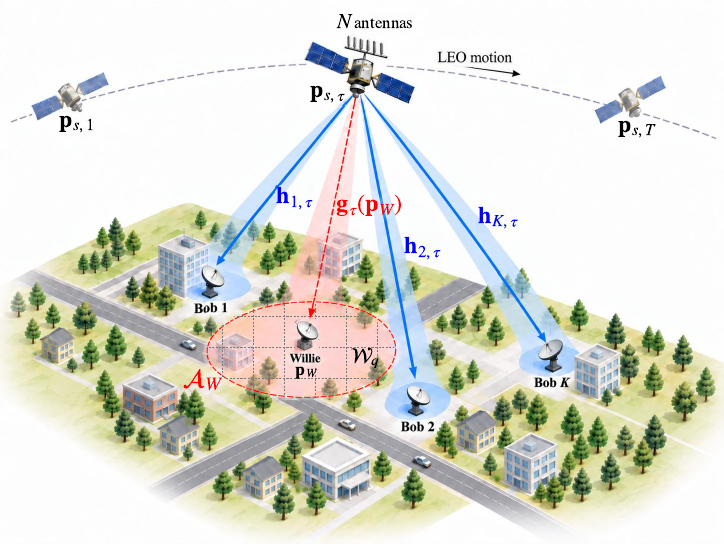}
    \vspace{-2mm}
    \caption{LEO satellite downlink with multiple Bobs and a passive Willie.}
    \vspace{-4mm}
    \label{fig:sys_model}
\end{figure}

\subsection{LEO Geometry and Channel Uncertainty}
\label{subsec:geometry}


As illustrated in Fig.~\ref{fig:sys_model}, we consider a LEO satellite equipped 
with a uniform linear array (ULA) comprising $N$ elements with half-wavelength spacing and a dedicated RF chain for each antenna element, serving $K$ single-antenna legitimate users, referred to as Bobs, indexed by $k\in\mathcal K\triangleq\{1,\ldots,K\}$. A passive single-antenna warden, Willie (W), attempts to distinguish transmission from silence during the scheduled downlink session. The antenna elements and their corresponding RF chains are indexed by $n\in\mathcal N\triangleq\{1,\ldots,N\}$. The scheduled session is partitioned into $T$ quasi-static transmission blocks, indexed by $\tau\in\mathcal T\triangleq\{1,\ldots,T\}$, each comprising $L$ channel uses, for a total of $n_{\mathrm{obs}}=TL$ channel uses. The symbol duration is $T_s$, so each transmission block spans $LT_s$, and the total active transmission time is $n_{\mathrm{obs}}T_s$. The block duration is chosen so that the effective channels can be treated as constant after timing, carrier-phase, and Doppler compensation. The satellite trajectory and array orientation over the scheduled transmission blocks are assumed known when the precoding vectors are designed. 

Let $\mathbf p_{s,\tau}$ denote the satellite position at transmission block $\tau$ and $\mathbf p_u$ the position of a ground node $u\in\mathcal K\cup\{W\}$. The slant range, the unit line-of-sight (LoS) vector, and the direction cosine with respect to the array axis are
\begin{equation}
\begin{aligned}
d_{u,\tau}&=\norm{\mathbf p_u-\mathbf p_{s,\tau}},\\
\mathbf q_{u,\tau}&=\frac{\mathbf p_u-\mathbf p_{s,\tau}}{d_{u,\tau}},\\
\nu_{u,\tau}&=\mathbf o_{a,\tau}^{T}\mathbf q_{u,\tau},
\end{aligned}
\label{eq:geometry}
\end{equation}
where $\mathbf o_{a,\tau}$ is the unit vector along the array axis. Hence, for a generic direction cosine $\nu\in[-1,1]$, the normalized
ULA response is
\begin{equation}
\mathbf a(\nu)=\frac{1}{\sqrt{N}}
[1,e^{-\mathrm j\pi\nu},\ldots,e^{-\mathrm j\pi(N-1)\nu}]^T.
\label{eq:steering}
\end{equation}
Under the adopted free-space LoS model, the channel amplitude toward node $u$ is
\begin{equation}
\beta_{u,\tau}=\frac{\lambda}{4\pi d_{u,\tau}}
\sqrt{NG_sG_u}\,10^{-\Lambda_{u,\tau}/20},
\label{eq:beta}
\end{equation}
where $G_s$ and $G_u$ denote element and receive-antenna power gains and $\Lambda_{u,\tau}$ collects additional propagation losses in dB. 
The satellite predicts each Bob's channel from the known satellite trajectory and the Bob's reported position. Residual orbit and position errors, imperfect phase and Doppler compensation, and atmospheric propagation effects make the actual channel deviate from this prediction. We therefore model the nominal and actual channels of Bob $k$ as
\begin{align}
\widehat{\mathbf h}_{k,\tau}
&=\beta_{k,\tau}e^{-\mathrm j2\pi d_{k,\tau}/\lambda}
\mathbf a(\nu_{k,\tau}),\label{eq:bob_nom}\\
\mathbf h_{k,\tau}&=\widehat{\mathbf h}_{k,\tau}+\mathbf e_{k,\tau},
\quad \norm{\mathbf e_{k,\tau}}\leq\delta^B_{k,\tau},
\label{eq:bob_unc}
\end{align}
where $\mathbf e_{k,\tau}\in\mathbb C^N$ denotes the channel prediction error of Bob $k$ in transmission block $\tau$, and $\delta^B_{k,\tau}=\kappa_B \|\widehat{\mathbf h}_{k,\tau}\|$ specifies its norm bound.

Willie's exact position $\mathbf p_W$ is unknown to the satellite, which only knows that it lies in a bounded ground region $\mathcal A_W$, referred to as the location uncertainty region. Willie remains at the same position throughout the session, but his channel varies across transmission blocks as the satellite moves, according to
\begin{equation}
\mathbf g_\tau(\mathbf p_W)
=\beta_{W,\tau}(\mathbf p_W)
e^{-\mathrm j2\pi d_{W,\tau}(\mathbf p_W)/\lambda}
\mathbf a\bigl(\nu_{W,\tau}(\mathbf p_W)\bigr).
\label{eq:willie_channel}
\end{equation}
Regarding Willie, the satellite knows only the location uncertainty region $\mathcal A_W$, an upper bound on his receive-antenna gain $G_W$, and a lower bound $\sigma_{w,\min}^2$ on his receiver-noise variance, such that $\sigma_w^2\geq\sigma_{w,\min}^2>0$.

\subsection{Relative-Pattern Signaling}
\label{subsec:patterns}

We consider $\mathcal M$-PSK, with modulation order $\mathcal M$ divisible by four, and alphabet $\mathcal S =\left\{e^{\mathrm j2\pi m/\mathcal M}: m=0,\ldots,\mathcal M-1\right\}.$ The alphabet is closed under multiplication and conjugation, which enables the relative-pattern representation introduced next. The restriction to $\mathcal M$ divisible by four ensures that $\mathcal S$ is a union of rotated QPSK subsets.
At channel use $\ell$ of transmission block $\tau$, the multiuser symbol vector is
\begin{equation}
\mathbf s_{\tau,\ell}
=
[s_{1,\tau,\ell},\ldots,s_{K,\tau,\ell}]^T
\in\mathcal S^K,
\end{equation}
where $s_{k,\tau,\ell}$ is the $\mathcal{M}$-PSK symbol intended for Bob $k$.

To parameterize the joint symbol distribution, we use Bob~1 as a phase reference and describe the remaining Bobs' symbols relative to it. Accordingly, define the set of relative PSK patterns as
\begin{equation}
\mathcal U
=
\left\{
\begin{array}{l}
\mathbf u_r=[1,u_{2,r},\ldots,u_{K,r}]^T :\\
u_{k,r}\in\mathcal S,\ k=2,\ldots,K
\end{array}
\right\}_{r=1}^{R},
\label{eq:relative_set}
\end{equation}
which contains $R=\mathcal{M}^{K-1}$ patterns. We enumerate them as $\mathbf u_1,\ldots,\mathbf u_R$, indexed by $r\in\mathcal R\triangleq\{1,\ldots,R\}$, and denote by $u_{k,r}$ the $k$-th entry of $\mathbf u_r$.
Each multiuser symbol vector admits the unique decomposition
\begin{equation}
\mathbf s_{\tau,\ell}
=\zeta_{\tau,\ell}\mathbf u_{\mathsf r_{\tau,\ell}},
\label{eq:relative_decomp}
\end{equation}
where $\zeta_{\tau,\ell}=s_{1,\tau,\ell}$ is the common rotation and $\mathsf r_{\tau,\ell}\in\mathcal R$ is the corresponding relative-pattern index.

\subsection{Probabilistic Multiuser Pattern Shaping}
\label{subsec:shaping}
The pattern probabilities $\boldsymbol\pi_\tau=[\pi_{\tau,1},\ldots,\pi_{\tau,R}]^T$ are design variables optimized jointly with the precoding vectors. In transmission block $\tau$, they specify
\begin{equation}
\Pr(\mathsf r_{\tau,\ell}=r)=\pi_{\tau,r},
\quad
\sum_{r=1}^{R}\pi_{\tau,r}=1.
\label{eq:pi_def}
\end{equation}
We optimize the joint symbol distribution within a jointly rotation-invariant family, generated by a common rotation $\zeta_{\tau,\ell}$ that is uniform over $\mathcal S$ and independent of the relative-pattern index $\mathsf r_{\tau,\ell}$. This construction guarantees uniform symbol marginals at every Bob while restricting the probability design to the relative patterns. The relative-pattern indices and common rotations are drawn independently across channel uses, with $\boldsymbol\pi_\tau$ fixed during transmission block $\tau$.
For each relative pattern $r$ at transmission block $\tau$, the satellite designs a representative SLP vector $\mathbf v_{\tau,r}=[v_{1,\tau,r},\ldots,v_{N,\tau,r}]^T \in\mathbb C^N$, where $n\in\mathcal N$ and $v_{n,\tau,r}$ denotes the signal transmitted by the $n$-th RF chain. Hence, the transmitted signal is
\begin{equation}
\mathbf x_{\tau,\ell}
=\zeta_{\tau,\ell}\mathbf v_{\tau,\mathsf r_{\tau,\ell}}.
\label{eq:transmitted}
\end{equation}
Thus, the satellite designs only $R$ representative precoding vectors $\mathbf v_{\tau,1},\ldots,\mathbf v_{\tau,R}$ per transmission block, one for each relative pattern, rather than one for each of the $\mathcal{M}^K$ multiuser symbol vectors. The common rotation $\zeta_{\tau,\ell}$ rotates the selected precoding vector and is Bob~1's $\mathcal{M}$-PSK symbol, rather than artificial randomness. This reduction uses the rotational symmetry of PSK CI precoding~\cite{Masouros2015}.

\begin{lemma}
\label{lem:entropy}
For any pattern distribution $\boldsymbol\pi_\tau$ and any channel use $\ell$ of transmission block $\tau$, the symbol $s_{k,\tau,\ell}$ of every Bob is uniformly distributed over $\mathcal S$, and
\begin{equation}
\begin{aligned}
H(\mathbf s_{\tau,\ell})
&=\log_2 \mathcal{M}+H(\boldsymbol\pi_\tau),\\
H(\boldsymbol\pi_\tau)
&=-\sum_{r=1}^{R}\pi_{\tau,r}\log_2\pi_{\tau,r}.
\end{aligned}
\label{eq:entropy}
\end{equation}
\end{lemma}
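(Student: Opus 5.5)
The plan is to use the decomposition \eqref{eq:relative_decomp} together with two facts: the common rotation $\zeta_{\tau,\ell}$ is uniform on $\mathcal S$ and independent of $\mathsf r_{\tau,\ell}$, and $\mathcal S$ is a group under multiplication. Both claims then reduce to elementary facts about uniform group elements and about entropies of bijective images.

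\emph{Uniform marginals.} For Bob~1 the claim is immediate, since $s_{1,\tau,\ell}=\zeta_{\tau,\ell}$ and $\zeta_{\tau,\ell}$ is uniform on $\mathcal S$ by construction. For $k\geq 2$ we have $s_{k,\tau,\ell}=\zeta_{\tau,\ell}u_{k,\mathsf r_{\tau,\ell}}$. Condition on $\mathsf r_{\tau,\ell}=r$. By independence, $\zeta_{\tau,\ell}$ is still uniform on $\mathcal S$. Multiplication by the fixed element $u_{k,r}\in\mathcal S$ is a bijection of $\mathcal S$ onto itself, because $\mathcal S$ is closed under multiplication and $u_{k,r}^{-1}=u_{k,r}^*\in\mathcal S$. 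Hence, for every $s\in\mathcal S$,
\[
\Pr\bigl(s_{k,\tau,\ell}=s\mid\mathsf r_{\tau,\ell}=r\bigr)=\Pr\bigl(\zeta_{\tau,\ell}=s\,u_{k,r}^*\bigr)=1/\mathcal M .
\]
Averaging over $r$ with weights $\pi_{\tau,r}$ gives the uniform marginal, whatever $\boldsymbol\pi_\tau$ is.

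\emph{Entropy.} The map $(\zeta,r)\mapsto\zeta\mathbf u_r$ from $\mathcal S\times\mathcal R$ to $\mathcal S^K$ is a bijection. It is injective because the first entry of $\mathbf u_r$ equals $1$, so $\zeta$ is recovered as the first entry of the image. The pattern is then recovered as $\mathbf u_r=\zeta^*\mathbf s$, and the $\mathbf u_r$ are distinct by enumeration. Surjectivity is the uniqueness statement under \eqref{eq:relative_decomp}. Since entropy is invariant under bijections,
\[
H(\mathbf s_{\tau,\ell})=H(\zeta_{\tau,\ell},\mathsf r_{\tau,\ell})=H(\zeta_{\tau,\ell})+H(\mathsf r_{\tau,\ell}),
\]
where the second equality uses independence. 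Finally, $H(\zeta_{\tau,\ell})=\log_2\mathcal M$ by uniformity and $H(\mathsf r_{\tau,\ell})=H(\boldsymbol\pi_\tau)$ by \eqref{eq:pi_def}, which is \eqref{eq:entropy}.

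None of these steps is a real obstacle. The only point needing care is to state the bijectivity explicitly, relying on the normalization $u_{1,r}=1$ and on the group property of $\mathcal S$.
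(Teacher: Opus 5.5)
Your proof is correct and follows the paper's argument. You obtain the uniform marginals from the fact that multiplication by $u_{k,r}$ permutes $\mathcal S$, combined with the independence of $\zeta_{\tau,\ell}$ and $\mathsf r_{\tau,\ell}$; you obtain the entropy identity from the bijectivity of $(\zeta,r)\mapsto\zeta\mathbf u_r$ together with independence. Your explicit injectivity argument adds detail the paper's proof only asserts. Strictly, surjectivity corresponds to the existence part of the decomposition rather than its uniqueness, and injectivity alone already suffices for the entropy step.
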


\begin{proof}
For a fixed relative pattern $r$, multiplication by $u_{k,r}\in\mathcal S$ permutes the $\mathcal M$-PSK alphabet. Since $\zeta_{\tau,\ell}$ is uniform over $\mathcal S$ and independent of the pattern index, $s_{k,\tau,\ell}=\zeta_{\tau,\ell}u_{k,r}$ is uniform over $\mathcal S$ conditional on every $r$, and hence for any distribution $\boldsymbol\pi_\tau$. Moreover, the mapping $(\zeta_{\tau,\ell},\mathsf r_{\tau,\ell}) \mapsto\mathbf s_{\tau,\ell}$ is bijective. Independence therefore gives $H(\mathbf s_{\tau,\ell}) =H(\zeta_{\tau,\ell})+H(\mathsf r_{\tau,\ell}) =\log_2 \mathcal{M}+H(\boldsymbol\pi_\tau)$.
\end{proof}

Lemma~\ref{lem:entropy} implies that each Bob can use a standard $\mathcal M$-PSK decision rule without knowing $\boldsymbol\pi_\tau$, while nonuniform shaping reduces the multiuser input entropy from $K\log_2 \mathcal M$ to $\log_2 \mathcal M+H(\boldsymbol\pi_\tau)$ bits per channel use. The symbols of different Bobs are then generally statistically dependent. The satellite generates the multiuser symbol vectors jointly, and \eqref{eq:entropy} describes the entropy of this joint input rather than achievable rates for the individual Bobs.


\subsection{Constructive Interference and Power Constraints}
\label{subsec:ci}

At channel use $\ell$ of transmission block $\tau$, Bob $k$ receives
\begin{equation}
y_{k,\tau,\ell}
=\mathbf h_{k,\tau}^{H}\mathbf x_{\tau,\ell}
+\omega^B_{k,\tau,\ell},
\label{eq:bob_rx}
\end{equation}
where $\omega^B_{k,\tau,\ell}\sim\CN(0,\sigma_k^2)$ denotes receiver noise with power $\sigma_k^2$. The CI design places the noiseless received signal inside the decision region of the intended symbol $s_{k,\tau,\ell}$ with a prescribed margin. Multiplication by the conjugate of the intended symbol maps every PSK symbol to the decision region of $1$, giving the symbol-aligned noiseless signal
\begin{equation}
z_{k,\tau,r}
=s_{k,\tau,\ell}^{*}
\mathbf h_{k,\tau}^{H}\mathbf x_{\tau,\ell}\\
=\mathbf h_{k,\tau}^{H}
\mathbf v_{\tau,r}u_{k,r}^{*},
\label{eq:aligned}
\end{equation}
for any channel use satisfying $\mathsf r_{\tau,\ell}=r$. The second equality follows from $s_{k,\tau,\ell}=\zeta_{\tau,\ell}u_{k,r}$ and $|\zeta_{\tau,\ell}|=1$. Hence, $z_{k,\tau,r}$ does not depend on the common rotation, and $R$ pattern-wise CI conditions per Bob and transmission block suffice.  

For $\mathcal M$-PSK, the decision region of the rotated symbol is a sector with half-angle $\phi=\pi/\mathcal M$ around the positive real axis. Given a prescribed SNR threshold $\gamma_{\rm req}>0$, the CI condition \cite{Masouros2015,Alodeh2015} requires $z_{k,\tau,r}$ to lie in this sector shifted by $\sqrt{\gamma_{\rm req}}\sigma_k$ along the positive real axis, i.e.,
\begin{equation}
|\Im\{z_{k,\tau,r}\}|
\leq
\left(\Re\{z_{k,\tau,r}\}
-\sqrt{\gamma_{\rm req}}\sigma_k\right)\tan\phi.
\label{eq:nominal_ci}
\end{equation}
This shift keeps the noiseless received signal away from both original decision boundaries, providing a prescribed noise margin.
Since the actual channel is unknown, \eqref{eq:nominal_ci} must hold for every channel error in \eqref{eq:bob_unc}. Splitting the absolute value into two linear inequalities, indexed by $b\in\{-1,+1\}$, the worst-case error in each of them follows from the Cauchy--Schwarz inequality, which yields the equivalent condition
\begin{align}
&b\,\Im\{\widehat z_{k,\tau,r}\}
-\tan\phi\,\Re\{\widehat z_{k,\tau,r}\}
+\delta^B_{k,\tau}\sec\phi\,\|\mathbf v_{\tau,r}\|
\nonumber\\
&\hspace{15mm}
+\sqrt{\gamma_{\rm req}}\sigma_k\tan\phi\leq 0,
\quad b\in\{-1,+1\},
\label{eq:robust_ci}
\end{align}
where $\widehat z_{k,\tau,r}=\widehat{\mathbf h}_{k,\tau}^{H}\mathbf v_{\tau,r}u_{k,r}^{*}$ is the aligned signal for the nominal channel. The term $\delta^B_{k,\tau}\sec\phi\,\|\mathbf v_{\tau,r}\|$ is the margin lost compared to the worst-case channel error.

Since the common rotation has unit magnitude, the instantaneous power of RF chain $n$ is $|v_{n,\tau,r}|^2$ and the total instantaneous transmit power is $\|\mathbf v_{\tau,r}\|^2$. We impose a peak limit $P_n^{\rm pk}$ per RF chain, which reflects amplifier saturation, an average limit $P_{n,\tau}^{\rm av}$ per RF chain and block, and an energy budget $E_{\rm sess}$ for the complete session, which reflects the energy available to the satellite during the scheduled session:
\begin{subequations}\label{eq:power}
\begin{align}
|v_{n,\tau,r}|^2&\leq P_n^{\rm pk}, &&\forall n,\tau,r, \label{eq:peak}\\
\sum_{r=1}^{R}\pi_{\tau,r}|v_{n,\tau,r}|^2&\leq P_{n,\tau}^{\rm av}, &&\forall n,\tau, \label{eq:avgfeed}\\
LT_s\sum_{\tau=1}^{T}\sum_{r=1}^{R}\pi_{\tau,r}\|\mathbf v_{\tau,r}\|^2&\leq E_{\rm sess}. \label{eq:energy}
\end{align}
\end{subequations}
The peak constraint applies to every precoding vector, whereas \eqref{eq:avgfeed} and \eqref{eq:energy} constrain averages over the pattern distribution, since the transmitted power depends on the selected pattern.

\section{Finite-Session Covertness Analysis and Robust Certificates}
\label{sec:covertness}

This section quantifies how well Willie can detect the transmission. We first derive the exact distinguishability of transmission and silence for a given Willie position in Section~\ref{subsec:mixture}. Since this position is unknown, we then bound Willie's channel over the entire uncertainty region in Section~\ref{subsec:location} and combine the two results into a covertness certificate that holds for every admissible position in Section~\ref{subsec:certificates}. Finally, Section~\ref{subsec:comparison} interprets the certificates in the low-leakage regime and compares them with a simpler bound.

\subsection{Weighted Gaussian-Mixture Observation}
\label{subsec:mixture}

Willie must decide whether the satellite transmits or remains silent, based on all his observations during the session. Through his channel \eqref{eq:willie_channel}, his received signal at channel use $\ell$ of block $\tau$ is
\begin{equation}
Y_{\tau,\ell}
=\mathbf g_\tau^H(\mathbf p_W)\mathbf x_{\tau,\ell}
+\omega_{\tau,\ell}, 
\label{eq:willie_rx}
\end{equation}
where $\omega_{\tau,\ell}\sim\CN(0,\sigma_w^2)$ denotes the receiver noise, independent across channel uses.
For relative pattern $r$, define the leakage coefficient $ \mu_{\tau,r}(\mathbf p_W)=\mathbf g_\tau^H(\mathbf p_W)\mathbf v_{\tau,r}$,
i.e., the transmitted signal component received by Willie before the common rotation. Here, leakage refers to the received signal component relevant to detection, rather than to information leakage as in secrecy.
Substituting \eqref{eq:transmitted} into \eqref{eq:willie_rx} and suppressing the channel-use index, Willie's binary detection problem in block $\tau$ is 
\begin{equation}
\mathcal H_0:Y=\omega,\qquad
\mathcal H_1:Y=\zeta\mu_{\tau,\mathsf r}+\omega,
\label{eq:hypotheses}
\end{equation}
where $\mathcal H_0$ and $\mathcal H_1$ represent silence and transmission, respectively.

Willie is assumed to know his channel, his noise variance, the scheduled transmission blocks, the precoding vectors $\{\mathbf v_{\tau,r}\}$, and the distributions $\{\boldsymbol\pi_\tau\}$. Only the instantaneous realizations of $\mathsf r$ and $\zeta$ are unknown to him. The precoding table acts as a codebook known to Willie, analogous to a known constellation, whereas the selected entry and its common rotation depend on the instantaneous symbols, which remain unknown to him. Their distributions are specified in Section~\ref{subsec:shaping}. In line with Kerckhoffs's principle, this is a worst-case assumption for the transmitter, since the whole system design is known to the adversary. Therefore, the resulting guarantee also holds for a less-informed Willie, because additional side information cannot degrade the performance of his optimal detector.

Let $p_a(y)$ denote the complex Gaussian density with mean $a$ and variance $\sigma_w^2$, namely
\begin{equation}
p_a(y)=\frac{1}{\pi\sigma_w^2}
\exp\left(-\frac{|y-a|^2}{\sigma_w^2}\right).
\label{eq:gaussian_density}
\end{equation}
Under $\mathcal H_0$, the observation density is $p_0(y)\equiv p_a(y)|_{a=0}$.
Under $\mathcal H_1$, averaging over the unknown relative pattern and common PSK rotation gives the weighted Gaussian mixture
\begin{equation}
p_{1,\tau}(y)
=\sum_{r=1}^{R}\frac{\pi_{\tau,r}}{\mathcal{M}}
\sum_{\zeta\in\mathcal S}
p_{\zeta\mu_{\tau,r}}(y).
\label{eq:mixture}
\end{equation}
The position $\mathbf p_W$ enters $p_{1,\tau}(y)$ through the leakage coefficients $\mu_{\tau,r}(\mathbf p_W)$, although we suppress this argument in the density notation. Under the assumed spatially constant receiver-noise variance, $p_0(y)$ does not depend on $\mathbf p_W$.

Willie's minimum detection error is determined by the total-variation distance between $p_{1,\tau}$ and $p_0$, which, like the Kullback--Leibler divergence, has no closed form for a Gaussian mixture. We therefore use the chi-square divergence
\begin{equation}
\chi_\tau^2(\mathbf p_W)
=\int_{\mathbb C}
\frac{[p_{1,\tau}(y)-p_0(y)]^2}{p_0(y)}
\,dy,
\label{eq:chi}
\end{equation}
which admits an exact expression for the considered mixture, factorizes over independent channel uses, and upper-bounds the total-variation distance \cite{Tsybakov2009}. 
The following proposition provides its exact expression for the adopted PSK shaping model. It is stated in terms of $D_\tau=1+\chi_\tau^2$, which is the quantity that multiplies over independent channel uses.

\begin{proposition}
\label{prop:exact}
Let
$D_\tau(\mathbf p_W)=1+\chi_\tau^2(\mathbf p_W)$.
Then
\begin{equation}
D_\tau(\mathbf p_W)
=\sum_{r=1}^{R}\sum_{t=1}^{R}
\pi_{\tau,r}\pi_{\tau,t}
\Phi_{\mathcal M}\left(
\frac{\mu_{\tau,r}^*\mu_{\tau,t}}{\sigma_w^2}
\right),
\label{eq:exactD}
\end{equation}
where
\begin{equation}
\Phi_{\mathcal M}(z)
=\frac{1}{\mathcal M}
\sum_{m=0}^{\mathcal M-1}
\exp\!\left(
2\Re\{e^{\mathrm j2\pi m/\mathcal M}z\}
\right).
\label{eq:phi4}
\end{equation}
\end{proposition}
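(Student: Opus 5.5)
The plan is to expand the square in the definition of $\chi^2_\tau$ and use the identity $1+\chi^2=\int p_1^2/p_0\,dy$. Since $\int p_1=\int p_0=1$, we have
\begin{equation*}
\chi_\tau^2=\int \frac{p_{1,\tau}^2}{p_0}\,dy-2\int p_{1,\tau}\,dy+\int p_0\,dy=\int \frac{p_{1,\tau}^2}{p_0}\,dy-1 ,
\end{equation*}
so $D_\tau=\int p_{1,\tau}^2/p_0\,dy$. Substituting the mixture \eqref{eq:mixture} and expanding the square gives a quadruple sum over $(r,\zeta)$ and $(t,\zeta')$ with weights $\pi_{\tau,r}\pi_{\tau,t}/\mathcal M^2$. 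Each term is the pairwise integral $\int p_a p_b/p_0\,dy$ with $a=\zeta\mu_{\tau,r}$ and $b=\zeta'\mu_{\tau,t}$.

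The key computation is the Gaussian identity
\begin{equation*}
\int_{\mathbb C}\frac{p_a(y)p_b(y)}{p_0(y)}\,dy=\exp\!\left(\frac{2\Re\{a^*b\}}{\sigma_w^2}\right).
\end{equation*}
To obtain it, I combine the exponents:
\begin{equation*}
-|y-a|^2-|y-b|^2+|y|^2=-|y-(a+b)|^2+|a+b|^2-|a|^2-|b|^2 .
\end{equation*}
The remaining Gaussian integral in $y$ equals one, and $|a+b|^2-|a|^2-|b|^2=2\Re\{a^*b\}$. This completing-the-square step is the only genuinely computational part, and it is routine.

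Next I reduce the double sum over rotations. Here $a^*b=\zeta^*\zeta'\mu_{\tau,r}^*\mu_{\tau,t}$. Because $\mathcal S$ is a group under multiplication, for each fixed $\zeta$ the product $\zeta^*\zeta'$ ranges over all of $\mathcal S$ exactly once as $\zeta'$ varies. Hence the double rotation sum $\frac{1}{\mathcal M^2}\sum_{\zeta,\zeta'}$ collapses to
\begin{equation*}
\frac{1}{\mathcal M}\sum_{m=0}^{\mathcal M-1}\exp\!\left(2\Re\{e^{\mathrm j2\pi m/\mathcal M}\mu_{\tau,r}^*\mu_{\tau,t}/\sigma_w^2\}\right)=\Phi_{\mathcal M}\!\left(\frac{\mu_{\tau,r}^*\mu_{\tau,t}}{\sigma_w^2}\right).
\end{equation*}
Summing over $r,t$ with weights $\pi_{\tau,r}\pi_{\tau,t}$ then yields \eqref{eq:exactD}.

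The main point requiring care is this group-reduction step: one must check that the reindexing $\zeta'\mapsto\zeta^*\zeta'$ is a bijection of $\mathcal S$, which follows from closure under multiplication and conjugation. One should also note that all integrals converge, since the combined exponent is a negative-definite quadratic in $y$, so interchanging the finite sums with the integral is justified.
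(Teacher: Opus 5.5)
Your proposal is correct and follows essentially the same route as the paper's proof in Appendix~\ref{app:exact}: you reduce $D_\tau$ to $\int p_{1,\tau}^2/p_0\,dy$, obtain the Gaussian overlap identity $\exp(2\Re\{a^*b\}/\sigma_w^2)$ by completing the square, and then collapse the double rotation sum. Your bijection $\zeta'\mapsto\zeta^*\zeta'$ on $\mathcal S$ is the same counting argument the paper uses when it notes that exactly $\mathcal M$ of the $\mathcal M^2$ pairs satisfy $\zeta_1^*\zeta_2=\upsilon$.
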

\begin{proof} The proof is presented in Appendix~\ref{app:exact}.\end{proof}

For the geometric LoS channel in \eqref{eq:willie_channel}, the propagation phase is common to all patterns within a block and cancels in every product $\mu_{\tau,r}^{*}\mu_{\tau,t}$, so $D_\tau(\mathbf p_W)$ is independent of it. The kernel $\Phi_{\mathcal M}$ averages $e^{2\Re\{\upsilon z\}}$ over the $\mathcal M$ PSK rotations $\upsilon\in\mathcal S$, thus explains Willie's ignorance of the common rotation, while \eqref{eq:exactD} depends on the precoding vectors and on the probabilities through every pair of patterns $(r,t)$ through the products $\mu_{\tau,r}^*\mu_{\tau,t}$.

Next, we extend the analysis to the complete session. Let $\chi_{\rm sess}^2(\mathbf p_W)$ denote the chi-square divergence between the joint distributions of Willie's session observations under $\mathcal H_1$ and $\mathcal H_0$. Since the pattern indices, the common rotations, and the noise samples are independent across channel uses, it factorizes over the channel uses. The $L$ uses of block $\tau$ are identically distributed and each contributes the factor $D_\tau(\mathbf p_W)$, thus
\begin{equation}
1+\chi_{\rm sess}^2(\mathbf p_W)
=\prod_{\tau=1}^{T}
D_\tau(\mathbf p_W)^L.
\label{eq:session_product}
\end{equation}
Taking the logarithm yields
\begin{equation}
\ln\left(1+\chi_{\rm sess}^2(\mathbf p_W)\right)
=L\sum_{\tau=1}^{T}
\ln D_\tau(\mathbf p_W).
\label{eq:session_log}
\end{equation}
Since Willie's channel changes between blocks while his position remains fixed, \eqref{eq:session_log} accumulates detection evidence over different LEO geometries, with each block contributing according to its own leakage. Silent intervals between blocks do not contribute additional distinguishability, since the corresponding observations follow the same noise-only distribution under both hypotheses.

Let $\mathbb P_0$ and $\mathbb P_1(\mathbf p_W)$ denote the joint distributions of Willie's session observations under $\mathcal H_0$ and $\mathcal H_1$, respectively. For a fixed $\mathbf p_W$, the minimum sum $\xi^\star(\mathbf p_W)$ of false-alarm and missed-detection probabilities over all detectors satisfies \cite{Bash2013,Tsybakov2009}
\begin{equation}
\begin{aligned}
\xi^\star(\mathbf p_W)
&=1-\TV\!\left(\mathbb P_1(\mathbf p_W),\mathbb P_0\right)\\
&\geq 1-\frac12\sqrt{\chi_{\rm sess}^2(\mathbf p_W)}.
\end{aligned}
\label{eq:TV}
\end{equation}
A blind decision achieves a false-alarm and missed-detection probability sum of one.
Following the standard covertness criterion \cite{Bash2013,Yan2019}, the transmission is called $\epsilon$-covert, for a prescribed $0<\epsilon<1$, if $\xi^\star(\mathbf p_W)\geq1-\epsilon$ for every $\mathbf p_W\in\mathcal A_W$.
Covertness therefore requires Willie's optimal detector to perform close to a blind decision, rather than requiring zero received signal energy.
Since \eqref{eq:session_log} is additive over the blocks, we express guarantees through bounds on $\ln(1+\chi_{\rm sess}^2)$. Any nonnegative $C$ satisfying
\begin{equation}
\ln\left(1+\chi_{\rm sess}^2(\mathbf p_W)\right)\leq C,\quad\forall\mathbf p_W\in\mathcal A_W,
\end{equation}
guarantees
\begin{equation}
\xi^\star(\mathbf p_W)\geq\max\left\{0,1-\frac12\sqrt{e^C-1}\right\},\quad\forall\,\mathbf p_W\in\mathcal A_W.
\label{eq:general_covertness}
\end{equation}
By \eqref{eq:general_covertness}, requiring $\frac12\sqrt{e^C-1}\leq\epsilon$ guarantees $\epsilon$-covertness, equivalently,
\begin{equation}
C\leq C_\epsilon\triangleq\ln(1+4\epsilon^2).
\label{eq:target_covertness}
\end{equation}
Next, we bound the leakage over the continuous location region.

\subsection{Leakage Envelope over Willie's Uncertainty Region}
\label{subsec:location}

Willie's position enters \eqref{eq:exactD} only through the leakage coefficients $\mu_{\tau,r}(\mathbf p_W)$. To account for the unknown position, we therefore seek upper bounds on $|\mu_{\tau,r}(\mathbf p_W)|$ that hold throughout $\mathcal A_W$. A single bound over the whole region would be loose, since Willie's channel varies considerably across a region of tens of kilometers. We therefore partition $\mathcal A_W$ into $Q$ cells $\{\mathcal W_q\}_{q\in\mathcal Q}$, with $\mathcal Q=\{1,\ldots,Q\}$ and $\bigcup_{q\in\mathcal Q}\mathcal W_q=\mathcal A_W$, and bound the channel variation within each cell. Each cell is enclosed by a ball of radius $r_q$ centered at $\mathbf p_q$, i.e.,
\begin{equation}
\|\mathbf p_W-\mathbf p_q\|\leq r_q,
\quad \forall\mathbf p_W\in\mathcal W_q.
\end{equation}
The cells partition $\mathcal A_W$, while their enclosing balls can overlap. The balls are used only to bound channel variation within each cell, thus their overlap does not affect the guarantee.

For transmission block $\tau$, let $d_{\tau,q}$, $\nu_{\tau,q}$, and $\beta_{\tau,q}$ denote the slant range, the direction cosine, and the channel amplitude at the cell center $\mathbf p_q$, where $\beta_{\tau,q}$ is computed from \eqref{eq:beta} using the upper bound on $G_W$.
Since $|\mu_{\tau,r}(\mathbf p_W)|=\beta_{W,\tau}(\mathbf p_W)\,|\widetilde{\mathbf a}^H(\nu_{W,\tau}(\mathbf p_W))\mathbf v_{\tau,r}|$, Willie's position affects the leakage through two factors, the channel amplitude and the array response, which we bound in turn.
From the reverse triangle inequality, the slant range of every location in $\mathcal W_q$ satisfies $d_{W,\tau}(\mathbf p_W)\geq d_{\tau,q}-r_q$. Since $\beta$ is inversely proportional to the slant range in \eqref{eq:beta}, and assuming $r_q<d_{\tau,q}$ and worst-case antenna gains and losses within the cell, the channel amplitude is bounded by
\begin{equation}
\beta_{\tau,q}^{\max}=\beta_{\tau,q}\frac{d_{\tau,q}}{d_{\tau,q}-r_q}.
\label{eq:beta_max}
\end{equation}

Next, we bound the variation of the array response. Intuitively, a displacement of at most $r_q$, observed from a distance of at least $d_{\tau,q}-r_q$, can change the LoS direction by at most $r_q/(d_{\tau,q}-r_q)$. Formally, the unit LoS vector $\mathbf q(\mathbf p)=(\mathbf p-\mathbf p_{s,\tau})/\|\mathbf p-\mathbf p_{s,\tau}\|$ in \eqref{eq:geometry} has a Jacobian of norm $1/\|\mathbf p-\mathbf p_{s,\tau}\|\leq1/(d_{\tau,q}-r_q)$ within the enclosing ball. Since $\nu$ is the projection of $\mathbf q$ on the unit array axis, every direction cosine in $\mathcal W_q$ lies within
\begin{equation}
\Delta\nu_{\tau,q}=\min\left\{2,\frac{r_q}{d_{\tau,q}-r_q}\right\}
\label{eq:delta_nu}
\end{equation}
of $\nu_{\tau,q}$, i.e., in the interval $[\nu_{\tau,q}^-,\nu_{\tau,q}^+]$ with $\nu_{\tau,q}^-=\max\{-1,\nu_{\tau,q}-\Delta\nu_{\tau,q}\}$ and $\nu_{\tau,q}^+=\min\{1,\nu_{\tau,q}+\Delta\nu_{\tau,q}\}$. To translate this interval into a bound on the array response, we use the phase-centered steering vector
\begin{equation}
\widetilde{\mathbf a}(\nu)=e^{\mathrm j\pi(N-1)\nu/2}\mathbf a(\nu).
\label{eq:centering}
\end{equation}
The centering does not change $|\widetilde{\mathbf a}^H(\nu)\mathbf v|$, but it minimizes the norm of the derivative, which becomes $\pi\sqrt{(N^2-1)/12}$, so the resulting bound is tighter. Evaluating the response only at the cell center would require a margin covering the whole interval. Instead, we evaluate it at $I_{\mathrm a}\geq2$ equally spaced anchors
\begin{equation}
\nu_{\tau,q,i_{\mathrm a}}
=\nu_{\tau,q}^{-}
+\frac{i_{\mathrm a}-1}{I_{\mathrm a}-1}
(\nu_{\tau,q}^{+}-\nu_{\tau,q}^{-}),
\quad i_{\mathrm a}=1,\ldots,I_{\mathrm a}.
\end{equation}
so that every point of the interval lies within 
$\Delta\nu_{\tau,q}^{\mathrm{res}}
={(\nu_{\tau,q}^{+}-\nu_{\tau,q}^{-})}/{[2(I_{\mathrm a}-1)]}$
of some anchor. The mean-value inequality then gives
\begin{equation}
\begin{aligned}
&\min_{i_{\mathrm a}=1,\ldots,I_{\mathrm a}}
\left\|
\widetilde{\mathbf a}(\nu)
-\widetilde{\mathbf a}(\nu_{\tau,q,i_{\mathrm a}})
\right\|
\leq\delta_{\tau,q}^{a},\\
&\delta_{\tau,q}^{a}\triangleq
\min\left\{
2,\pi\sqrt{\frac{N^2-1}{12}}\,
\Delta\nu_{\tau,q}^{\mathrm{res}}
\right\},
\end{aligned}
\label{eq:array_diff}
\end{equation}
where the value $2$ follows from the unit norm of the steering vectors. Thus, $\delta_{\tau,q}^{a}$ bounds how much the steering vector can change between any admissible direction and its nearest anchor.

Combining \eqref{eq:beta_max} and \eqref{eq:array_diff} with the triangle and Cauchy--Schwarz inequalities, for every $\mathbf p_W\in\mathcal W_q$,
\begin{equation}
\begin{aligned}
&|\mu_{\tau,r}(\mathbf p_W)|\leq A_{\tau,r,q},\\
&A_{\tau,r,q}\triangleq\beta_{\tau,q}^{\max}
\max_{i_{\mathrm a}=1,\ldots,I_{\mathrm a}}
\left|
\widetilde{\mathbf a}^{H}(\nu_{\tau,q,i_{\mathrm a}})
\mathbf v_{\tau,r}
\right|\\
&\hspace{15mm}
+\beta_{\tau,q}^{\max}\delta_{\tau,q}^{a}
\|\mathbf v_{\tau,r}\|.
\end{aligned}
\label{eq:amplitude_guarantee}
\end{equation}
If Willie's channel includes an additional error that is fixed within each block and has norm at most $\delta^{\rm mod}_{\tau,q}$ within the cell, \eqref{eq:amplitude_guarantee} still holds, by the triangle inequality, after adding $\delta^{\rm mod}_{\tau,q}\|\mathbf v_{\tau,r}\|$ to $A_{\tau,r,q}$. Finally, using the lower bound $\sigma_{w,\min}^2$ on Willie's noise variance, the normalized leakage envelope
\begin{equation}
\rho_{\tau,r,q}=\frac{A_{\tau,r,q}^2}{\sigma_{w,\min}^2}
\label{eq:rho}
\end{equation}
satisfies
\begin{equation}
\frac{|\mu_{\tau,r}(\mathbf p_W)|^2}{\sigma_w^2}\leq\rho_{\tau,r,q},\quad\forall\mathbf p_W\in\mathcal W_q.
\label{eq:rho_guarantee}
\end{equation}

\subsection{Finite-Session Covertness Certificate}
\label{subsec:certificates}
The envelope \eqref{eq:rho_guarantee} bounds the magnitude of each leakage coefficient over a cell, whereas \eqref{eq:exactD} depends on the products $\mu_{\tau,r}^*\mu_{\tau,t}$ of every pair of patterns, including their unknown phases. To combine the two, we bound the kernel $\Phi_{\mathcal M}$ by a function of the magnitudes only. We therefore seek a bound in the form of a product of a term for $r$ and a term for $t$, so that the weighted double sum in \eqref{eq:exactD} factorizes into the square of a single sum, which is linear in the probabilities.


\begin{lemma}
\label{lem:product}
For $t_1,t_2\geq 0$ and $z\in\mathbb C$ satisfying $|z|^2\leq t_1t_2$,
\begin{equation}
0\leq \Phi_{\mathcal M}(z)-1
\leq \sinh(t_1)\sinh(t_2)
\leq \bigl(e^{t_1}-1\bigr)\bigl(e^{t_2}-1\bigr).
\label{eq:product_bound}
\end{equation}
\end{lemma}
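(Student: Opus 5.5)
The plan is to expand $\Phi_{\mathcal M}$ in a double power series in $z$ and $z^*$, keep only the terms that survive averaging over the $\mathcal M$ PSK rotations, and compare the result coefficient by coefficient with $\sinh^2$ of the geometric mean $x=\sqrt{t_1t_2}$. The three inequalities in \eqref{eq:product_bound} are handled in turn.

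\emph{Lower bound.} Since $\exp$ is convex, Jensen's inequality applied to the uniform average in \eqref{eq:phi4} gives
\[
\Phi_{\mathcal M}(z)\geq\exp\Bigl(\tfrac{2}{\mathcal M}\textstyle\sum_{m}\Re\{e^{\mathrm j2\pi m/\mathcal M}z\}\Bigr)=1,
\]
because the $\mathcal M$-th roots of unity sum to zero.

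\emph{Series representation.} Write $\omega=e^{\mathrm j2\pi/\mathcal M}$, so that $2\Re\{\omega^m z\}=\omega^m z+\omega^{-m}z^*$. The two terms commute, so $e^{2\Re\{\omega^m z\}}=e^{\omega^m z}e^{\omega^{-m}z^*}$. Expanding both exponentials gives
\[
\Phi_{\mathcal M}(z)=\sum_{k,j\geq0}\frac{z^k(z^*)^j}{k!\,j!}\cdot\frac{1}{\mathcal M}\sum_{m}\omega^{m(k-j)} .
\]
The inner average equals $1$ if $k\equiv j \pmod{\mathcal M}$ and $0$ otherwise. Hence $\Phi_{\mathcal M}(z)-1$ is the sum of $z^k(z^*)^j/(k!j!)$ over all pairs $(k,j)\neq(0,0)$ with $\mathcal M\mid k-j$. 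Because $\mathcal M$ is even, every such pair has $k+j=2s$ for some integer $s\geq1$. Using the triangle inequality and $|z|^2\leq t_1t_2=x^2$,
\[
\Phi_{\mathcal M}(z)-1\leq\sum_{s\geq1}c_s x^{2s},\qquad c_s=\sum_{\substack{k+j=2s\\ k\equiv j\ (\mathrm{mod}\ \mathcal M)}}\frac{1}{k!\,j!}.
\]

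\emph{Counting step.} This is where divisibility by four is used. Since $4\mid\mathcal M$, the condition $k\equiv j$ mod $\mathcal M$ implies $k\equiv j$ mod $4$. Together with $k+j=2s$, this gives $k=s+(k-j)/2\equiv s$ mod $2$. Therefore
\[
c_s\leq\frac{1}{(2s)!}\sum_{k\equiv s\ (2)}\binom{2s}{k}=\frac{2^{2s-1}}{(2s)!}.
\]
The binomial sum is $2^{2s-1}$ because, for $2s\geq2$, the even-index and odd-index binomial coefficients each sum to half of $2^{2s}$. The right-hand side is exactly the coefficient of $x^{2s}$ in $\sinh^2x=(\cosh 2x-1)/2$, so $\Phi_{\mathcal M}(z)-1\leq\sinh^2\sqrt{t_1t_2}$. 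As a sanity check, for small $z$ the leading term is the $(k,j)=(1,1)$ term $|z|^2$, which matches the leading term $t_1t_2$ of $\sinh t_1\sinh t_2$.

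\emph{Passing to $\sinh t_1\sinh t_2$.} It remains to show $\sinh^2\sqrt{t_1t_2}\leq\sinh t_1\sinh t_2$. This is the step I expect to need the most care, including the edge cases $t_1=0$ or $t_2=0$, where $z=0$ and the claim is trivial. My route is to show that $u\mapsto\ln\sinh(e^u)$ is convex. Its derivative is $e^u\coth(e^u)$, and $y\mapsto y\coth y$ is increasing on $(0,\infty)$, which I would verify from $\frac{d}{dy}(y\coth y)=(\sinh y\cosh y-y)/\sinh^2y\geq0$. Convexity at the midpoint of $\ln t_1$ and $\ln t_2$ then yields $\ln\sinh\sqrt{t_1t_2}\leq\tfrac12\bigl(\ln\sinh t_1+\ln\sinh t_2\bigr)$, which is the desired inequality.

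\emph{Last inequality.} This is elementary: $\sinh t\leq e^t-1$ is equivalent to $e^t+e^{-t}\geq2$, and both factors are nonnegative for $t_1,t_2\geq0$, so the product inequality follows.
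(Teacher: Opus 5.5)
Your proof is correct, and it passes through the same intermediate bound as the paper, $0\le\Phi_{\mathcal M}(z)-1\le\sinh^2|z|\le\sinh^2\sqrt{t_1t_2}$, but you reach it by a different route.

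\textbf{First part.} The paper splits the $\mathcal M$ rotations into $\mathcal M/4$ rotated QPSK quadruples and writes each sub-kernel as $\Phi_4(w)=\tfrac12[\cosh(2\Re\{w\})+\cosh(2\Im\{w\})]$. It then uses the identity $\Phi_4(w)-1=\sinh^2(\Re\{w\})+\sinh^2(\Im\{w\})$, which gives nonnegativity immediately, and applies $x^{2\iota}+y^{2\iota}\le(x^2+y^2)^{\iota}$ termwise to the series of $\sinh^2$. You instead get nonnegativity from Jensen, which holds for every $\mathcal M\ge2$. Your upper bound comes from a roots-of-unity filter on the double series in $z$ and $z^*$, followed by a parity count.

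Your route pinpoints exactly where $4\mid\mathcal M$ enters. It forces $k\equiv s\pmod 2$, which for $\mathcal M=4$ makes the coefficient bound an equality. For $\mathcal M=2$, by contrast, every $k$ survives and one gets $\Phi_2(x)-1=2\sinh^2x$ for real $x$, so the lemma would fail. The paper's grouping is shorter, and it is reused in the proof of Proposition~\ref{prop:separable} to obtain $\Phi_{\mathcal M}(z)\le e^{|z|^2}$.

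\textbf{Second part.} For $\sinh^2\sqrt{t_1t_2}\le\sinh t_1\sinh t_2$, the paper applies Cauchy--Schwarz directly to the power series of $\sinh$. This is a one-line argument valid for any series with nonnegative coefficients. Your convexity of $u\mapsto\ln\sinh(e^u)$, via monotonicity of $y\coth y$, proves the same multiplicative-convexity fact by calculus, specialized to $\sinh$.

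The final comparison with $(e^{t_1}-1)(e^{t_2}-1)$ is handled identically in both.
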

\begin{proof} The proof is presented in Appendix~\ref{app:kernel}.\end{proof}

The hyperbolic-sine bound is tighter than the exponential bound and is attained for QPSK when $t_1=t_2$ and $z=t_1$. The two bounds are asymptotically equivalent as $t_1,t_2\to0$.
Lemma~\ref{lem:product} applies directly to every term of \eqref{eq:exactD}. For $\mathbf p_W\in\mathcal W_q$, the envelope \eqref{eq:rho_guarantee} gives $|\mu_{\tau,r}^*\mu_{\tau,t}|^2/\sigma_w^4\leq\rho_{\tau,r,q}\rho_{\tau,t,q}$, which is the condition of the lemma with $t_1=\rho_{\tau,r,q}$ and $t_2=\rho_{\tau,t,q}$. The resulting product bounds factorize the double sum, which motivates the probability-weighted leakage measures
\begin{align}
M_{\tau,q}^{\mathrm{sinh}}&\triangleq\sum_{r=1}^{R}\pi_{\tau,r}\sinh(\rho_{\tau,r,q}),\label{eq:Msinh}\\
M_{\tau,q}&\triangleq\sum_{r=1}^{R}\pi_{\tau,r}\bigl(e^{\rho_{\tau,r,q}}-1\bigr).\label{eq:M}
\end{align}
Both are averages of a per-pattern leakage term over the pattern distribution, and $M_{\tau,q}^{\mathrm{sinh}}\leq M_{\tau,q}$ by Lemma~\ref{lem:product}. The following theorem uses these measures to bound the divergence of each block over every cell and, by accumulating the blocks, provides session certificates that hold throughout the uncertainty region.

\begin{theorem}
\label{thm:exp}
For any precoding vectors and pattern distributions, every block $\tau$, every cell $q$, and every $\mathbf p_W\in\mathcal W_q$,
\begin{equation}
\chi_\tau^2(\mathbf p_W)\leq\left(M_{\tau,q}^{\mathrm{sinh}}\right)^2\leq M_{\tau,q}^2.
\label{eq:exo_one}
\end{equation}
Moreover, define
\begin{align}
C_{\mathrm{sinh}}&\triangleq\max_{q\in\mathcal Q}L\sum_{\tau=1}^{T}\ln\!\left[1+\left(M_{\tau,q}^{\mathrm{sinh}}\right)^2\right], \label{eq:Csinh}\\
C_{\mathrm E}&\triangleq\max_{q\in\mathcal Q}L\sum_{\tau=1}^{T}M_{\tau,q}^2.
\label{eq:CE}
\end{align}
Then, for every $\mathbf p_W\in\mathcal A_W$,
\begin{equation}
\ln\!\left(1+\chi_{\mathrm{sess}}^2(\mathbf p_W)\right)\leq C_{\mathrm{sinh}}\leq C_{\mathrm E}.
\label{eq:CE_chain}
\end{equation}
\end{theorem}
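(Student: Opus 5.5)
The plan is to start from the exact expression of Proposition~\ref{prop:exact} and subtract one, using $\sum_{r,t}\pi_{\tau,r}\pi_{\tau,t}=1$. This gives
\begin{equation*}
\chi_\tau^2(\mathbf p_W)=\sum_{r=1}^{R}\sum_{t=1}^{R}\pi_{\tau,r}\pi_{\tau,t}\left[\Phi_{\mathcal M}\!\left(\frac{\mu_{\tau,r}^*\mu_{\tau,t}}{\sigma_w^2}\right)-1\right].
\end{equation*}
Fix a cell $q$ and a point $\mathbf p_W\in\mathcal W_q$. By \eqref{eq:rho_guarantee}, the argument $z_{rt}=\mu_{\tau,r}^*\mu_{\tau,t}/\sigma_w^2$ satisfies $|z_{rt}|^2=\bigl(|\mu_{\tau,r}|^2/\sigma_w^2\bigr)\bigl(|\mu_{\tau,t}|^2/\sigma_w^2\bigr)\leq\rho_{\tau,r,q}\rho_{\tau,t,q}$. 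Lemma~\ref{lem:product} therefore applies with $t_1=\rho_{\tau,r,q}\geq0$ and $t_2=\rho_{\tau,t,q}\geq0$.

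Using the middle inequality of the lemma term by term, with nonnegative weights, the double sum is bounded by $\sum_{r,t}\pi_{\tau,r}\pi_{\tau,t}\sinh(\rho_{\tau,r,q})\sinh(\rho_{\tau,t,q})$. This factorizes as $(M_{\tau,q}^{\mathrm{sinh}})^2$. The second inequality in \eqref{eq:exo_one} comes from $0\leq\sinh(x)\leq e^x-1$ for $x\geq0$, also from Lemma~\ref{lem:product} with $t_1=t_2$. It gives $0\leq M_{\tau,q}^{\mathrm{sinh}}\leq M_{\tau,q}$, and squaring preserves the order because both quantities are nonnegative.

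For the session statement, take any $\mathbf p_W\in\mathcal A_W$. Since the cells cover $\mathcal A_W$, there is some $q$ with $\mathbf p_W\in\mathcal W_q$, and \eqref{eq:exo_one} holds for every block $\tau$ with this same $q$. This is where the fixed-position assumption matters: one cell index serves all blocks. Then \eqref{eq:session_log} and the monotonicity of $\ln$ give
\begin{equation*}
\ln\bigl(1+\chi_{\rm sess}^2(\mathbf p_W)\bigr)=L\sum_{\tau}\ln\bigl(1+\chi_\tau^2(\mathbf p_W)\bigr)\leq L\sum_{\tau}\ln\bigl[1+(M_{\tau,q}^{\mathrm{sinh}})^2\bigr]\leq C_{\mathrm{sinh}},
\end{equation*}
where the last step maximizes over $q$. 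For $C_{\mathrm{sinh}}\leq C_{\mathrm E}$, I would use $\ln(1+x)\leq x$ for $x\geq0$ together with $(M^{\mathrm{sinh}}_{\tau,q})^2\leq M_{\tau,q}^2$. This bounds each $q$-term of \eqref{eq:Csinh} by the corresponding term of \eqref{eq:CE}, and the bound survives taking the maximum over $q$.

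No step is a real obstacle, since Lemma~\ref{lem:product} carries the analytic weight. The one point needing care is the hypothesis $|z|^2\leq t_1t_2$: it must be verified from magnitude envelopes alone, without any phase information on $\mu_{\tau,r}^*\mu_{\tau,t}$. It must also hold under the true noise variance $\sigma_w^2\geq\sigma_{w,\min}^2$, which is already built into \eqref{eq:rho_guarantee}. A further small check is that the argument of $\Phi_{\mathcal M}$ in \eqref{eq:exactD} is exactly $z_{rt}$, so the lemma's bound on $\Phi_{\mathcal M}(z)-1$ applies to it directly.
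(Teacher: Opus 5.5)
Your proposal is correct and follows essentially the same route as the paper's proof: rewrite $\chi_\tau^2$ as the weighted double sum of $\Phi_{\mathcal M}(z_{r,t})-1$ using $\sum_{r,t}\pi_{\tau,r}\pi_{\tau,t}=1$, apply Lemma~\ref{lem:product} term-wise with $t_1=\rho_{\tau,r,q}$ and $t_2=\rho_{\tau,t,q}$, and factorize into $\bigl(M_{\tau,q}^{\mathrm{sinh}}\bigr)^2\leq M_{\tau,q}^2$. You then use the single cell containing $\mathbf p_W$ for all blocks together with \eqref{eq:session_log}, and finish with $\ln(1+x)\leq x$, exactly as the paper does.
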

\begin{proof} The proof is presented in Appendix~\ref{app:exp}.\end{proof}

The certificates are valid at arbitrary leakage levels. The maximum over cells is taken after accumulating the transmission blocks because Willie occupies one fixed location throughout the session. We optimize $C_{\mathrm E}$, since $e^{\rho}-1$ admits an exact exponential-cone representation, whereas $\sinh(\rho)$ does not. For small leakage, $e^{\rho}-1$ and $\sinh(\rho)$ are nearly equal, so optimizing $C_{\mathrm E}$ instead of $C_{\mathrm{sinh}}$ has only a minor effect on the resulting design. 
Evaluating $C_{\mathrm{sinh}}\leq C_{\mathrm E}$ for the optimized design provides an additional, tighter post-design covertness guarantee.

\subsection{Low-Leakage Interpretation and Comparison with a Separable Bound}

\label{subsec:comparison}

Covertness requires the leakage received by Willie to be weak, so the regime of practical interest is that of small leakage. In this regime, the exact expression \eqref{eq:exactD} and the proposed certificate admit simple expansions, which reveal how shaping reduces Willie's distinguishability and how tight the certificate is. Fix a transmission block and a Willie location, and let $\varrho_{\tau,r}=|\mu_{\tau,r}(\mathbf p_W)|^2/\sigma_w^2$ denote the SNR at Willie for pattern $r$, with $\varrho_{\tau}^{\max}=\max_r\varrho_{\tau,r}$. For every $\mathcal M$ divisible by four, $\Phi_{\mathcal M}(z)=1+|z|^2+O(|z|^4)$.
Hence, \eqref{eq:exactD} yields
\begin{equation}
\ln D_\tau(\mathbf p_W)
=
\left(\sum_{r=1}^{R}\pi_{\tau,r}\varrho_{\tau,r}\right)^2
+O\!\left((\varrho_{\tau}^{\max})^4\right).
\label{eq:exact_small}
\end{equation}
Thus, in the low-leakage regime, Willie's distinguishability is dominated by the mean SNR at Willie over the pattern distribution, which shaping can reduce by favoring patterns with lower leakage. The phases of the leakage coefficients, and hence the pairwise interactions in \eqref{eq:exactD}, affect only higher-order terms. For the proposed measures, $\sinh(\rho)=\rho+O(\rho^3)$ and $e^\rho-1=\rho+O(\rho^2)$ give
\begin{equation}
\begin{aligned}
\bigl(M_{\tau,q}^{\mathrm{sinh}}\bigr)^2
&=\left(\sum_{r=1}^{R}\pi_{\tau,r}\rho_{\tau,r,q}\right)^2+O\bigl((\rho_{\tau,q}^{\max})^4\bigr),\\
M_{\tau,q}^2
&=\left(\sum_{r=1}^{R}\pi_{\tau,r}\rho_{\tau,r,q}\right)^2+O\bigl((\rho_{\tau,q}^{\max})^3\bigr),
\end{aligned}
\label{eq:mean_small}
\end{equation}
where $\rho_{\tau,q}^{\max}=\max_r\rho_{\tau,r,q}$.
When both the spatial envelope and the noise lower bound are exact, $\rho_{\tau,r,q}=\varrho_{\tau,r}$, so both measures in \eqref{eq:mean_small} reproduce the leading term of \eqref{eq:exact_small}.

A simpler alternative to the proposed certificates bounds the kernel by $\Phi_{\mathcal M}(z)\leq e^{|z|^2}$ and then separates the pattern indices through $ab\leq(a^2+b^2)/2$. The resulting certificate and its relation to $C_{\mathrm{sinh}}$ are given next.

\begin{proposition}
\label{prop:separable}
For any precoding vectors and pattern distributions, every block $\tau$, every cell $q$, and every $\mathbf p_W\in\mathcal W_q$,
\begin{equation}
\ln D_\tau(\mathbf p_W)
\leq\bar d_{\tau,q}
\triangleq
2\ln\!\left[\sum_{r=1}^{R}\pi_{\tau,r}e^{\rho_{\tau,r,q}^2/2}\right].
\label{eq:dbar}
\end{equation}
Define the session certificate
\begin{equation}
C_{\mathrm S}\triangleq\max_{q\in\mathcal Q}L\sum_{\tau=1}^{T}\bar d_{\tau,q}.
\label{eq:CS}
\end{equation}
Then, for every $\mathbf p_W\in\mathcal A_W$,
\begin{equation}
\ln\!\left(1+\chi_{\mathrm{sess}}^2(\mathbf p_W)\right)\leq C_{\mathrm{sinh}}\leq C_{\mathrm S}.
\label{eq:separable_session}
\end{equation}
\end{proposition}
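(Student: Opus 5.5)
The plan is to prove the per-block bound \eqref{eq:dbar} directly from the exact expression of Proposition~\ref{prop:exact}, then sum it over blocks for the session certificate. The comparison $C_{\mathrm{sinh}}\le C_{\mathrm S}$ will be checked block by block and cell by cell, since $\max_q$ and the sum over $\tau$ both preserve termwise inequalities. The first inequality in \eqref{eq:separable_session} is already given by Theorem~\ref{thm:exp}.

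\emph{Step 1 (kernel bound).} I will first show $\Phi_{\mathcal M}(z)\le e^{|z|^2}$. Since $\mathcal M$ is divisible by four, $\mathcal S$ is a union of $\mathcal M/4$ rotated QPSK subsets, so $\Phi_{\mathcal M}(z)$ is an average of QPSK kernels evaluated at rotated arguments of the same modulus. It therefore suffices to treat QPSK. Writing $z=a+\mathrm jb$, the QPSK kernel is $\tfrac12(\cosh 2a+\cosh 2b)$. Expanding in powers gives
\[
\tfrac12(\cosh 2a+\cosh 2b)=\sum_{k\ge0}\frac{4^k}{2(2k)!}\bigl(a^{2k}+b^{2k}\bigr).
\]
I will compare this termwise with $e^{|z|^2}=\sum_k (a^2+b^2)^k/k!$. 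Two facts suffice: $a^{2k}+b^{2k}\le(a^2+b^2)^k$, and
\[
\frac{4^k\,k!}{2(2k)!}=\frac{2^{k-1}}{(2k-1)!!}\le1 \quad\text{for } k\ge1,
\]
while the $k=0$ terms are both equal to $1$. I expect this step to be the main obstacle. The tempting shortcut, bounding $\cosh 2a\le e^{2a^2}$ and then averaging, fails because the arithmetic mean exceeds the geometric mean. The series comparison avoids this.

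\emph{Step 2 (separation).} Fix $\mathbf p_W\in\mathcal W_q$. By \eqref{eq:rho_guarantee},
\[
|\mu_{\tau,r}^*\mu_{\tau,t}|^2/\sigma_w^4\le\rho_{\tau,r,q}\rho_{\tau,t,q}\le(\rho_{\tau,r,q}^2+\rho_{\tau,t,q}^2)/2.
\]
Combining this with Step~1 in \eqref{eq:exactD} gives
\[
D_\tau\le\sum_{r,t}\pi_{\tau,r}\pi_{\tau,t}\,e^{\rho_{\tau,r,q}^2/2}e^{\rho_{\tau,t,q}^2/2}=\Bigl(\sum_r\pi_{\tau,r}e^{\rho_{\tau,r,q}^2/2}\Bigr)^2.
\]
Taking logarithms yields \eqref{eq:dbar}. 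Inserting this into \eqref{eq:session_log} and maximizing over the cell containing $\mathbf p_W$ gives $\ln(1+\chi^2_{\rm sess})\le C_{\mathrm S}$.

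\emph{Step 3 (comparison with $C_{\mathrm{sinh}}$).} It remains to show, for each $\tau$ and $q$, that
\[
1+\Bigl(\sum_r\pi_{\tau,r}\sinh\rho_{\tau,r,q}\Bigr)^2\le\Bigl(\sum_r\pi_{\tau,r}e^{\rho_{\tau,r,q}^2/2}\Bigr)^2 .
\]
First, pointwise $e^{\rho^2/2}\ge\cosh\rho=\sqrt{1+\sinh^2\rho}$, using the standard inequality $\cosh x\le e^{x^2/2}$ from comparing series. Second, the map $x\mapsto\sqrt{1+x^2}$ is convex, so Jensen's inequality gives
\[
\sum_r\pi_r\sqrt{1+x_r^2}\ge\sqrt{1+\bigl(\textstyle\sum_r\pi_rx_r\bigr)^2}
\]
with $x_r=\sinh\rho_{\tau,r,q}$. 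Chaining these two facts and squaring gives the desired inequality. Taking logarithms, multiplying by $L$, summing over $\tau$, and taking $\max_q$ yields $C_{\mathrm{sinh}}\le C_{\mathrm S}$.
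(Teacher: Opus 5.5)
Your proof is correct and follows the paper's argument essentially step for step: the same QPSK decomposition and termwise series comparison for $\Phi_{\mathcal M}(z)\le e^{|z|^2}$ (your $2^{k-1}\le(2k-1)!!$ is the paper's $2^{2\iota-1}\iota!\le(2\iota)!$), the same $ab\le(a^2+b^2)/2$ factorization, and the same final use of $\cosh\rho\le e^{\rho^2/2}$. The only difference is how you obtain $1+(\mathbb E[\sinh\rho])^2\le(\mathbb E[\cosh\rho])^2$: you apply Jensen to the convex map $x\mapsto\sqrt{1+x^2}$, whereas the paper writes $(\mathbb E[\cosh\rho])^2-(\mathbb E[\sinh\rho])^2=\mathbb E[e^{\rho}]\,\mathbb E[e^{-\rho}]\geq1$ and invokes Cauchy--Schwarz; both are equally short and valid.
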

\begin{proof} The proof is presented in Appendix~\ref{app:separable}.\end{proof}

To interpret the difference from the proposed certificates, consider the same low-leakage regime with exact spatial envelopes and noise lower bound. Then,
\begin{equation}
\bar d_{\tau,q}-\ln D_\tau
=\operatorname{Var}_{\boldsymbol\pi_\tau}(\varrho_{\tau,\mathsf r})+O\!\left((\varrho_{\tau}^{\max})^4\right).
\label{eq:variance_gap}
\end{equation}
Thus, the simpler bound incurs a leading penalty for the variation of the leakage across patterns, whereas the proposed measures reproduce the squared mean-leakage term without this penalty. This variation is precisely what shaping exploits, since it assigns different probabilities to patterns with different leakage.
Moreover, $C_{\mathrm{sinh}}$ is never looser than $C_{\mathrm S}$, at any leakage level, by Proposition~\ref{prop:separable}. The certificate $C_{\mathrm E}$, which is used for the optimization, does not share this property. For equal leakages $\rho>0$, $\bar d_{\tau,q}=\rho^2$, whereas $M_{\tau,q}^2=(e^\rho-1)^2>\rho^2$.

\section{Joint Precoding and Shaping Design}

\label{sec:design}
The certificate $C_{\mathrm E}$ of Section~\ref{sec:covertness} is an explicit function of the precoding vectors and the pattern distributions, and can therefore serve directly as a design objective. Section~\ref{subsec:joint} formulates its minimization under communication, power, and entropy constraints, and Section~\ref{subsec:ao} develops an alternating optimization with convex subproblems. Section~\ref{subsec:convergence} discusses initialization, convergence, and complexity, while Section~\ref{subsec:benchmarks} presents the benchmark designs.

\subsection{Session Entropy and Problem Formulation}
\label{subsec:joint}
The precoding tables and pattern distributions are designed for the scheduled transmission blocks. During each block, the probabilities remain fixed, while the selected precoding vector changes with the instantaneous relative pattern.
Minimizing the leakage alone would concentrate the probability on the few patterns with the lowest leakage, and the multiuser input would then carry little information. Therefore, we prescribe an average relative-pattern entropy $\bar H$ over the session, which by Lemma~\ref{lem:entropy} corresponds to an average input entropy of $\log_2\mathcal M+\bar H$ bits per channel use. Prescribing the average rather than the entropy of each block allows entropy to be allocated across blocks as the geometry changes. A per-block floor $H_{\rm floor}$, with $0\leq H_{\rm floor}\leq\bar H\leq\log_2R$, limits the concentration of each relative-pattern distribution when positive and setting \(H_{\mathrm{floor}}=0\) disables this additional restriction.
Finally, a probability floor $0<\pi_{\min}<1/R$ ensures that every symbol vector can still be transmitted.

Define $\mathbf V_\tau=[\mathbf v_{\tau,1},\ldots, \mathbf v_{\tau,R}]$ and the collections $\mathcal V=\{\mathbf V_\tau\}_{\tau\in\mathcal T}$ and $\boldsymbol\pi=\{\boldsymbol\pi_\tau\}_{\tau\in\mathcal T}$. Introducing an auxiliary variable $C$ for the worst cell in \eqref{eq:CE}, the proposed PS-SLP design minimizes $C_{\mathrm E}$ in epigraph form as
\begin{equation}
\tag{$\mathbf{P1}$}
\label{prob:main}
\begin{aligned}
\underset{\mathcal V,\boldsymbol\pi,C}
{\operatorname{minimize}}\quad&C\\
\textbf{s.t.}\quad
\mathrm C_1:\;&\eqref{eq:robust_ci},
\quad\forall k,\tau,r,\ b\in\{-1,1\},\\
\mathrm C_2:\;&L\sum_{\tau=1}^{T}M_{\tau,q}^2\leq C,
\quad\forall q\in\mathcal Q,\\
\mathrm C_3:\;&\eqref{eq:peak}\text{--}\eqref{eq:energy},\\
\mathrm C_4:\;&\sum_{r=1}^{R}\pi_{\tau,r}=1,
\quad\forall\tau,\\
&\pi_{\tau,r}\geq\pi_{\min},
\quad\forall\tau,r,\\
\mathrm C_5:\;&L\sum_{\tau=1}^{T}H(\boldsymbol\pi_\tau)
\geq n_{\mathrm{obs}}\bar H,\\
\mathrm C_6:\;&H(\boldsymbol\pi_\tau)\geq H_{\rm floor},
\quad\forall\tau.
\end{aligned}
\end{equation}
Unless otherwise indicated, indices range over $k\in\mathcal K$, $n\in\mathcal N$, $\tau\in\mathcal T$, and $r\in\mathcal R$. Constraint $\mathrm C_1$ enforces robust CI, while $\mathrm C_2$ makes $C$ equal to $C_{\mathrm E}$ at the optimum. Constraint $\mathrm C_3$ collects the power and energy limits, $\mathrm C_4$ defines the admissible pattern distributions, and $\mathrm C_5$ and $\mathrm C_6$ impose the session entropy requirement and the per-block entropy floor.

Let $\mathbf S^{n_{\mathrm{obs}}}$ denote the sequence of all $n_{\mathrm{obs}}$ multiuser symbol vectors. Independence across channel uses and Lemma~\ref{lem:entropy} imply
\begin{equation}
H(\mathbf S^{n_{\mathrm{obs}}})
\hspace{-0.75mm}=\hspace{-0.75mm}n_{\mathrm{obs}}\log_2\mathcal M
+L\sum_{\tau=1}^{T}H(\boldsymbol\pi_\tau)
\hspace{-0.75mm}\geq\hspace{-0.75mm} n_{\mathrm{obs}}(\log_2\mathcal M+\bar H).
\label{eq:Hinput}
\end{equation}
This is an input-entropy requirement, not a coded-throughput guarantee. Minimizing $C$ improves the detection-error guarantee in \eqref{eq:general_covertness}. It does not directly optimize Willie's exact detection error.  The structure of \eqref{prob:main} also reveals how the certificate depends on the session length.

\begin{remark}
\label{rem:length_scaling}
Let $T$, the geometry, the entropy requirements, and the power limits be fixed, with $E_{\mathrm{sess}}\propto n_{\mathrm{obs}}=TL$. Then the constraints of \eqref{prob:main} do not depend on $L$, since $L$ cancels in the energy and session-entropy constraints, whereas the objective is proportional to $L$ through $\mathrm C_2$. Hence, the same design is optimal for every $L$ and needs to be computed only once, and
\begin{equation}
C_{\mathrm E}=c\,n_{\mathrm{obs}},\quad
c=\max_{q\in\mathcal Q}\frac{1}{T}\sum_{\tau=1}^{T}M_{\tau,q}^2,
\end{equation}
where $c$ is evaluated at this design and does not depend on $L$. The same proportionality holds for $C_{\mathrm{sinh}}$. Let $\epsilon_{\min}^{\mathrm{cert}} =\frac12\sqrt{e^{C_{\mathrm E}}-1}$ denote the smallest covertness level certified by $C_{\mathrm E}$ through \eqref{eq:target_covertness}.
Consequently,
\begin{equation}
\epsilon_{\min}^{\mathrm{cert}}(n_{\mathrm{obs}})=\frac{1}{2}\sqrt{e^{c\,n_{\mathrm{obs}}}-1},
\end{equation}
and the largest session length certifiable for a target $\epsilon$ is
\begin{equation}
n_{\mathrm{obs}}^{\max}(\epsilon)=\frac{\ln(1+4\epsilon^2)}{c}.
\label{eq:nmax}
\end{equation}
\end{remark}

\subsection{Alternating Optimization}
\label{subsec:ao}
Problem~\eqref{prob:main} is jointly nonconvex, since the probabilities multiply precoding-dependent terms in the covertness constraint $\mathrm C_2$ and in the average-power and energy constraints. However, it is convex in each block of variables when the other is fixed. We therefore alternate between precoding and probability optimization.

\subsubsection{Precoding update}
At iteration $i$, fix $\boldsymbol\pi=\boldsymbol\pi^{(i)}$
and consider
\begin{equation}
\tag{$\mathbf{P1}$-$\mathbf{V}$}\label{prob:waveform}
\begin{aligned}
\underset{\mathcal V,C}{\operatorname{minimize}}\quad&C\\
\textbf{s.t.}\quad
&\mathrm C_1\text{--}\mathrm C_3
\text{ of }\eqref{prob:main},\\
&\boldsymbol\pi=\boldsymbol\pi^{(i)}.
\end{aligned}
\end{equation}
Constraints $\mathrm C_4$--$\mathrm C_6$ involve only the probabilities and are satisfied by $\boldsymbol\pi^{(i)}$. For fixed probabilities, the remaining difficulty lies in the covertness constraint $\mathrm C_2$. Its left-hand side is convex in the precoding vectors, but it composes a maximum over reference directions, a square, an exponential, and a weighted sum, which conic solvers cannot handle directly. Therefore, we express $\mathrm C_2$ through two standard cones, namely the exponential cone and the rotated second-order cone,
\begin{align}
\mathcal K_{\exp}
&=\{(x,y,z):y>0,\;ye^{x/y}\leq z\}\nonumber\\
&\quad\cup\{(x,0,z):x\leq0,\;z\geq0\},
\label{eq:expcone}\\
\mathcal K_{\mathrm r}
&=\{(x,y,\mathbf z):x,y\geq0,\;2xy\geq\|\mathbf z\|^2\},
\label{eq:rotated_cone}
\end{align}
where the dimension of $\mathbf z$ is determined by the cone argument. To this end, we introduce nonnegative auxiliary variables $a_{\tau,r,q}$, $b_{\tau,r,q}$, $w_{\tau,r,q}$, and $t_{\tau,q}$, which represent the leakage amplitude, the normalized leakage power, the exponential leakage term, and the probability-weighted leakage measure, respectively, and let $\mathbf t_q=[t_{1,q},\ldots,t_{T,q}]^T$.

\begin{proposition}
\label{prop:precoding_convex}
For fixed probabilities $\boldsymbol\pi^{(i)}$, the covertness constraint $\mathrm C_2$ holds if and only if there exist nonnegative $a_{\tau,r,q}$, $b_{\tau,r,q}$, $w_{\tau,r,q}$, and $t_{\tau,q}$ such that
\begin{subequations}
\label{eq:C2_conic}
\begin{align}
&\beta_{\tau,q}^{\max}\bigl|\widetilde{\mathbf a}^{H}(\nu_{\tau,q,i_{\mathrm a}})\mathbf v_{\tau,r}\bigr|
+\bigl(\beta_{\tau,q}^{\max}\delta_{\tau,q}^{a}+\delta_{\tau,q}^{\mathrm{mod}}\bigr)\|\mathbf v_{\tau,r}\|\nonumber\\
&\hspace{30mm}\leq a_{\tau,r,q},\quad\forall\tau,r,q,i_{\mathrm a},\label{eq:C2_conic_a}\\
&\bigl(b_{\tau,r,q},\sigma_{w,\min}^2/2,a_{\tau,r,q}\bigr)\in\mathcal K_{\mathrm r},\quad\forall\tau,r,q,\label{eq:C2_conic_b}\\
&\bigl(b_{\tau,r,q},1,1+w_{\tau,r,q}\bigr)\in\mathcal K_{\exp},\quad\forall\tau,r,q,\label{eq:C2_conic_c}\\
&\textstyle\sum_{r=1}^{R}\pi_{\tau,r}^{(i)}w_{\tau,r,q}\leq t_{\tau,q},\quad\forall\tau,q,\label{eq:C2_conic_d}\\
&\bigl(C,1/2,\sqrt L\,\mathbf t_q\bigr)\in\mathcal K_{\mathrm r},\quad\forall q.\label{eq:C2_conic_e}
\end{align}
\end{subequations}
\end{proposition}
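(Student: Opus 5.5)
The proof is a chain of epigraph reformulations, one per layer of the composite expression in $\mathrm C_2$. We read each cone membership back into a scalar inequality, and show equivalence in two directions. For the ``if'' direction, the auxiliary variables upper-bound the true quantities, and monotonicity propagates this to $C$. For the ``only if'' direction, we set every auxiliary variable equal to the quantity it represents; this choice is feasible.

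First we translate the cones. By \eqref{eq:rotated_cone}, $(b,\sigma_{w,\min}^2/2,a)\in\mathcal K_{\mathrm r}$ with $b,a\ge0$ means $b\,\sigma_{w,\min}^2\ge a^2$, i.e., $b\ge a^2/\sigma_{w,\min}^2$. By \eqref{eq:expcone} with $y=1>0$, $(b,1,1+w)\in\mathcal K_{\exp}$ means $e^{b}\le 1+w$, i.e., $w\ge e^{b}-1$. Finally, $(C,1/2,\sqrt L\,\mathbf t_q)\in\mathcal K_{\mathrm r}$ means $C\ge L\|\mathbf t_q\|^2=L\sum_\tau t_{\tau,q}^2$.

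Constraint \eqref{eq:C2_conic_a}, imposed for all anchors $i_{\mathrm a}$, is equivalent to $a_{\tau,r,q}\ge A_{\tau,r,q}$. Here $A_{\tau,r,q}$ is taken from \eqref{eq:amplitude_guarantee}, augmented by the modeling term $\delta^{\rm mod}_{\tau,q}\|\mathbf v_{\tau,r}\|$ as described after that equation. The equivalence holds because a maximum over finitely many terms is bounded by $a$ if and only if every term is.

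For the ``if'' direction, we compose these inequalities. Since $A\ge0$ and $x\mapsto x^2$ is increasing on $[0,\infty)$, we get $b_{\tau,r,q}\ge A_{\tau,r,q}^2/\sigma_{w,\min}^2=\rho_{\tau,r,q}$. Since $e^x-1$ is increasing, $w_{\tau,r,q}\ge e^{\rho_{\tau,r,q}}-1$. Because the probabilities $\pi^{(i)}_{\tau,r}\ge0$ are fixed, \eqref{eq:C2_conic_d} gives $t_{\tau,q}\ge M_{\tau,q}\ge0$. Squaring preserves order for nonnegative numbers, so $C\ge L\sum_\tau t_{\tau,q}^2\ge L\sum_\tau M_{\tau,q}^2$ for every $q$, which is $\mathrm C_2$.

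For the ``only if'' direction, suppose $\mathrm C_2$ holds. We choose $a=A$, $b=\rho$, $w=e^{\rho}-1$ and $t_{\tau,q}=M_{\tau,q}$. All of these are nonnegative, each cone condition holds with equality except the last, and the last reduces exactly to $\mathrm C_2$.

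The main obstacle is mostly bookkeeping rather than depth. The key point is that every intermediate map is monotone nondecreasing on the nonnegative orthant, and this requires the nonnegativity of the auxiliaries and of $\boldsymbol\pi^{(i)}$. Without it, relaxing equalities to inequalities would be invalid; in particular, squaring $t$ is only order-preserving for $t\ge0$.

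A second point to check is the boundary case of the exponential cone. The $y=0$ branch of \eqref{eq:expcone} is never used, since $y=1$ is fixed.

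Finally, we note that each constraint is jointly convex in $(\mathbf v_{\tau,r},a)$, being a norm of an affine map, and conic in the remaining variables. This is what makes \eqref{prob:waveform} an exact conic program.
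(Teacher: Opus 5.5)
Your proposal is correct and follows essentially the same argument as the paper. Each cone is read back as its scalar inequality. For the ``if'' direction, the resulting bounds are chained through monotone maps, using the nonnegativity of $\boldsymbol\pi^{(i)}$ and of $M_{\tau,q}$. For the converse, the auxiliaries are set to $A_{\tau,r,q}$, $\rho_{\tau,r,q}$, $e^{\rho_{\tau,r,q}}-1$, and $M_{\tau,q}$. One small wording fix: with $a_{\tau,r,q}=A_{\tau,r,q}$, constraint \eqref{eq:C2_conic_a} holds with equality only at the maximizing anchor $i_{\mathrm a}$, not at every anchor.
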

\begin{proof} The proof is presented in Appendix~\ref{app:precoding_cones}.\end{proof}

Consequently, \eqref{prob:waveform} is equivalent to the convex problem
\begin{equation}
\tag{$\mathbf{P2}$-$\mathbf{V}$}
\label{prob:precoding_conic}
\begin{aligned}
\underset{\mathcal V,C,\{a,b,w,t\}}{\operatorname{minimize}}\quad&C\\
\textbf{s.t.}\quad
&\mathrm C_1,\ \mathrm C_3\text{ of }\eqref{prob:main},\quad\eqref{eq:C2_conic},\\
&a_{\tau,r,q},b_{\tau,r,q},w_{\tau,r,q},t_{\tau,q}\geq0,
\end{aligned}
\end{equation}
whose constraints are second-order and exponential cones, since $\mathrm C_1$ combines affine terms with $\|\mathbf v_{\tau,r}\|$ and $\mathrm C_3$ is convex quadratic for fixed probabilities. 

\subsubsection{Probability update}
This optimization updates the block-wise pattern distributions during the design procedure. It is not performed at the symbol rate. During transmission, the data bits of each block can be mapped to pattern sequences with the optimized composition by distribution matching~\cite{Schulte2016}, which requires no online optimization.
For the updated precoding vectors $\mathcal V^{(i+1)}$,
define
\begin{equation}
f_{\tau,r,q}^{(i+1)}\triangleq
\exp\!\left(
\rho_{\tau,r,q}(\mathbf v_{\tau,r}^{(i+1)})
\right)-1.
\label{eq:fixed_f}
\end{equation}
The probability update considers
\begin{equation}
\tag{$\mathbf{P1}$-$\mathbf{P}$}\label{prob:probability}
\begin{aligned}
\underset{\boldsymbol\pi,C}{\operatorname{minimize}}\quad&C\\
\textbf{s.t.}\quad
&L\sum_{\tau=1}^{T}
\left(\sum_{r=1}^{R}
\pi_{\tau,r}f_{\tau,r,q}^{(i+1)}\right)^2\\
&\qquad\leq C,\quad\forall q\in\mathcal Q,\\
&\eqref{eq:avgfeed},\quad\eqref{eq:energy},\\
&\mathrm C_4\text{--}\mathrm C_6
\text{ of }\eqref{prob:main},\\
&\mathcal V=\mathcal V^{(i+1)}.
\end{aligned}
\end{equation}
For fixed precoding vectors, $M_{\tau,q}=\sum_r\pi_{\tau,r}f_{\tau,r,q}^{(i+1)}$ is linear in the probabilities, so the covertness constraints are convex quadratics. Let $\mathbf m_q^{(i+1)}(\boldsymbol\pi)$ denote the $T$-dimensional vector with entries $M_{\tau,q}$. The entropy constraints $\mathrm C_5$ and $\mathrm C_6$ are convex but, like $\mathrm C_2$, not directly in conic form. We therefore introduce real auxiliary variables $h_{\tau,r}$, which represent the entropy terms $\pi_{\tau,r}\ln\pi_{\tau,r}$.

\begin{proposition}
\label{prop:probability_convex}
Constraints $\mathrm C_5$ and $\mathrm C_6$ hold if and only if there exist real $h_{\tau,r}$ such that
\begin{subequations}
\label{eq:entropy_conic}
\begin{align}
&(-h_{\tau,r},\pi_{\tau,r},1)\in\mathcal K_{\exp},\quad\forall\tau,r,\label{eq:entropy_conic_a}\\
&L\textstyle\sum_{\tau=1}^{T}\sum_{r=1}^{R}h_{\tau,r}\leq-n_{\mathrm{obs}}\bar H\ln2,\label{eq:entropy_conic_b}\\
&\textstyle\sum_{r=1}^{R}h_{\tau,r}\leq-H_{\mathrm{floor}}\ln2,\quad\forall\tau.\label{eq:entropy_conic_c}
\end{align}
\end{subequations}
\end{proposition}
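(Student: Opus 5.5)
The plan is to characterize the exponential-cone membership $(-h_{\tau,r},\pi_{\tau,r},1)\in\mathcal K_{\exp}$ exactly, and then sum the resulting inequalities. The entropy in bits satisfies $H(\boldsymbol\pi_\tau)=-\frac{1}{\ln 2}\sum_r\pi_{\tau,r}\ln\pi_{\tau,r}$. Hence $\mathrm C_5$ is equivalent to $L\sum_\tau\sum_r\pi_{\tau,r}\ln\pi_{\tau,r}\le -n_{\mathrm{obs}}\bar H\ln2$, and $\mathrm C_6$ is equivalent to $\sum_r\pi_{\tau,r}\ln\pi_{\tau,r}\le -H_{\mathrm{floor}}\ln2$ for every $\tau$. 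The proof therefore reduces to showing that \eqref{eq:entropy_conic_a} is exactly the epigraph condition $h_{\tau,r}\ge\pi_{\tau,r}\ln\pi_{\tau,r}$.

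\emph{Step 1 (cone membership).} By \eqref{eq:expcone}, with $x=-h$, $y=\pi$, $z=1$, membership for $\pi>0$ means $\pi e^{-h/\pi}\le 1$. This is equivalent to $-h/\pi\le\ln(1/\pi)$, i.e., $h\ge\pi\ln\pi$. Since $\mathrm C_4$ imposes $\pi_{\tau,r}\ge\pi_{\min}>0$, we always have $y>0$. The boundary piece $\{(x,0,z)\}$ of the cone is then irrelevant. Even without the floor, that piece corresponds to $\pi=0$ and $h\ge 0=0\ln 0$, so the equivalence persists with the convention $0\ln0=0$.

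\emph{Step 2 (necessity).} Suppose $\mathrm C_5$ and $\mathrm C_6$ hold. Choose $h_{\tau,r}=\pi_{\tau,r}\ln\pi_{\tau,r}$. Step 1 gives \eqref{eq:entropy_conic_a} with equality in the epigraph. Substituting this choice, \eqref{eq:entropy_conic_b} and \eqref{eq:entropy_conic_c} become exactly the rewritten forms of $\mathrm C_5$ and $\mathrm C_6$ obtained above.

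\emph{Step 3 (sufficiency).} Suppose real $h_{\tau,r}$ satisfy \eqref{eq:entropy_conic}. Step 1 gives $\pi_{\tau,r}\ln\pi_{\tau,r}\le h_{\tau,r}$. Summing over $r$, and over $\tau$ with weight $L>0$, chains these inequalities with \eqref{eq:entropy_conic_b} and \eqref{eq:entropy_conic_c}. This yields the rewritten $\mathrm C_5$ and $\mathrm C_6$. Dividing by $-\ln2$ reverses the inequalities and recovers the bit-entropy forms.

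The only delicate point is Step 1, specifically the cone convention at $y=0$ and the sign conventions: $x=-h$, and the fact that an upper bound on $h$ corresponds to a lower bound on entropy. The monotone direction of the epigraph must be checked so that the inequalities in Steps 2 and 3 point the right way. The remaining steps are routine algebra.
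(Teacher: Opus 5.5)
Your proposal is correct and follows essentially the same route as the paper. Both arguments use $\pi_{\tau,r}\ge\pi_{\min}>0$ to reduce the exponential-cone membership to the epigraph condition $h_{\tau,r}\ge\pi_{\tau,r}\ln\pi_{\tau,r}$, then sum these inequalities for sufficiency and take $h_{\tau,r}=\pi_{\tau,r}\ln\pi_{\tau,r}$ for necessity. Your extra remark on the $y=0$ piece of the cone is correct but not needed, since the probability floor already rules out $\pi_{\tau,r}=0$.
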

\begin{proof} The proof is presented in Appendix~\ref{app:probability_cones}.\end{proof}

Consequently, \eqref{prob:probability} is equivalent to the convex problem
\begin{equation}
\tag{$\mathbf{P2}$-$\mathbf{P}$}
\label{prob:probability_conic}
\begin{aligned}
\underset{\boldsymbol\pi,C,\{h\}}{\operatorname{minimize}}\quad&C\\
\textbf{s.t.}\quad
&\bigl(C,1/2,\sqrt L\,\mathbf m_q^{(i+1)}(\boldsymbol\pi)\bigr)\in\mathcal K_{\mathrm r},\quad\forall q,\\
&\eqref{eq:avgfeed},\ \eqref{eq:energy},\ \mathrm C_4\text{ of }\eqref{prob:main},\ \eqref{eq:entropy_conic},
\end{aligned}
\end{equation}
whose constraints are affine, second-order, and exponential cones. Both updates admit exact convex reformulations and are solved numerically to the prescribed tolerance, without successive approximation of the subproblems.
Beyond its tractability, the probability update also reveals the structure of the optimal shaping.

\begin{remark}
\label{rem:gibbs}
For fixed precoding vectors, suppress the iteration superscript in \eqref{eq:fixed_f}. Since \eqref{prob:probability} is convex, Karush--Kuhn--Tucker (KKT) multipliers exist at its optimum under Slater's condition, which holds, e.g., if $\bar H<\log_2R$ and the uniform distribution satisfies the average-power and energy constraints. At such an optimum, define
\begin{equation}
\begin{aligned}
\psi_{\tau,r}
={}&2L\sum_{q\in\mathcal Q}\lambda_{C,q}M_{\tau,q}f_{\tau,r,q}\\
&+\sum_{n\in\mathcal N}\lambda_{n,\tau}|v_{n,\tau,r}|^2+\lambda_E LT_s\|\mathbf v_{\tau,r}\|^2,
\end{aligned}
\label{eq:psi}
\end{equation}
where $\lambda_{C,q}$, $\lambda_{n,\tau}$, and $\lambda_E$ are the nonnegative multipliers of the cell-covertness, average-power, and session-energy constraints, respectively, and $M_{\tau,q}=\sum_r\pi_{\tau,r}f_{\tau,r,q}$. Let $\lambda_H$ and $\lambda_{H,\tau}$ be the nonnegative multipliers of $\mathrm C_5$ and $\mathrm C_6$, written in $\leq0$ form, and set $\vartheta_\tau=(L\lambda_H+\lambda_{H,\tau})/\ln2$. For $\vartheta_\tau>0$, KKT stationarity gives
\begin{equation}
\pi_{\tau,r}\propto\exp\!\left(-\frac{\psi_{\tau,r}}{\vartheta_\tau}\right),\quad\pi_{\tau,r}>\pi_{\min}.
\label{eq:gibbs}
\end{equation}
Thus, patterns above the probability floor follow an exponential law in their marginal leakage-and-power cost, analogous to Maxwell--Boltzmann shaping \cite{Bocherer2015}. The parameter $\vartheta_\tau$, set by the entropy constraints, plays the role of a temperature. A stricter entropy requirement flattens the distribution, which becomes uniform for $\bar H=\log_2R$. If the per-block floor $\mathrm C_6$ is inactive, then $\lambda_{H,\tau}=0$ and $\vartheta_\tau=L\lambda_H/\ln2$ is common to all blocks. The blocks then differ only in their costs, and the session constraint $\mathrm C_5$ allocates entropy according to the block geometry.
\end{remark}

\subsection{Initialization, Convergence, and Complexity}
\label{subsec:convergence}
The alternating optimization requires a feasible starting point. The uniform distribution $\pi_{\tau,r}=1/R$ satisfies the probability and entropy constraints and is used as a first initialization for \eqref{prob:waveform}. Since the average-power and energy constraints depend on the probabilities, \eqref{prob:waveform} may be infeasible for this choice, in which case a few other admissible distributions are tried. Algorithm~\ref{alg:ao} summarizes the procedure once a feasible pair has been obtained.
With each update solved exactly and the previous iterate remaining feasible for the next subproblem, the objective of Algorithm~\ref{alg:ao} is nonincreasing. Being bounded below by zero through $\mathrm C_2$, it converges to a finite limit.

\begin{figure}[t]
\centering
\begin{minipage}{\columnwidth}
\hrule\smallskip
\refstepcounter{algorithm}\textbf{Algorithm \thealgorithm: Alternating PS-SLP Design}\label{alg:ao}
\smallskip\hrule\smallskip
\textbf{Input:} Channels, cell envelopes, CI and power requirements,
$\bar H$, $H_{\rm floor}$, $\pi_{\min}$, tolerance $\eta>0$,
and iteration limit $I_{\max}$.\\
\textbf{Initialize:} A feasible $(\mathcal V^{(0)},\boldsymbol\pi^{(0)})$, with $C^{(0)}=C_{\mathrm E}(\mathcal V^{(0)},\boldsymbol\pi^{(0)})$ and $i=0$.
\begin{enumerate}\setlength{\itemsep}{1pt}\setlength{\parskip}{0pt}
\item Solve \eqref{prob:precoding_conic} for $\mathcal V^{(i+1)}$.
\item Form \eqref{eq:fixed_f} and solve \eqref{prob:probability_conic} for $\boldsymbol\pi^{(i+1)}$.
\item Evaluate $C^{(i+1)}=C_{\mathrm E}(\mathcal V^{(i+1)},\boldsymbol\pi^{(i+1)})$ and check all constraints in \eqref{prob:main}. Accept the update only if all constraints hold and the objective does not increase, within the solver tolerances. Otherwise, return the last accepted pair.

\item Stop if
$|C^{(i+1)}-C^{(i)}|/\max\{C^{(i)},10^{-12}\}\leq\eta$
or $i+1=I_{\max}$. Otherwise increment $i$ and repeat.
\end{enumerate}
\textbf{Output:} The last accepted pair, $C_{\mathrm E}$, $C_{\mathrm{sinh}}$, and the corresponding guarantees from \eqref{eq:general_covertness}.
\smallskip\hrule
\end{minipage}
\end{figure}

The partition of $\mathcal A_W$ contains many cells, and including all of them in every conic solve would be costly. We therefore use constraint generation: each solve includes only a subset of $Q_{\mathrm a}$ active cells, the solution is then checked over all $Q$ cells, and any violating cells are added before solving again. Since the reduced problem is a relaxation of the full one, a solution that satisfies all cells is also optimal for the full subproblem. For $Q_{\mathrm a}$ active cells and $I_{\mathrm a}$ reference directions per cell, the precoding update has $2TRN$ real variables, $2KTR$ robust CI inequalities, $NTR$ peak-power, $NT$ average-power, and one energy constraint. Its conic representation in Proposition~\ref{prop:precoding_convex} adds $O(TRQ_{\mathrm a})$ auxiliary variables and exponential cones and $O(I_{\mathrm a}TRQ_{\mathrm a})$ second-order cones for the leakage envelopes. The probability update has $TR$ probabilities, $TR$ entropy-lifting variables, and $Q_{\mathrm a}$ quadratic covertness constraints.

Let $T_V(Q_{\mathrm a})$ and $T_\pi(Q_{\mathrm a})$ denote the costs of the two conic solves, which depend on sparsity and solver accuracy. With $I$ alternating iterations and at most $G$ constraint-generation rounds per update, the cost per initialization is
\begin{equation}
O\!\left(IG\left[T_V(Q_{\mathrm a})+T_\pi(Q_{\mathrm a})+I_{\mathrm a}NTRQ\right]\right),
\end{equation}
where the last term accounts for checking the certificates over all cells. The main scaling limitations are $R=\mathcal M^{K-1}$ and the $TRQ$ spatial-envelope terms. With $S$ initializations, the total design cost is $S$ times the above cost. The resulting SLP tables require $O(TRN)$ complex storage. During transmission, selecting and rotating a stored vector costs $O(N)$ operations per channel use, excluding pattern generation, giving $O(n_{\mathrm{obs}}N)$ online cost per session. No conic optimization is performed at the symbol rate.

\subsection{Proposed Designs and Benchmarks}
\label{subsec:benchmarks}
The proposed PS-SLP and PS-BLP designs are compared with their uniform-probability benchmarks, uniform symbol-level precoding (U-SLP) and uniform block-level precoding (U-BLP), under the same channel, CI, power, and energy requirements. PS-SLP jointly optimizes all precoding vectors and the pattern probabilities, as in \eqref{prob:main}. PS-BLP jointly optimizes the probabilities and a common linear precoder per transmission block, restricting
\begin{equation}
\mathbf v_{\tau,r}=\mathbf W_\tau\mathbf u_r,\quad\mathbf W_\tau\in\mathbb C^{N\times K}.
\label{eq:blp}
\end{equation}
Since $\mathbf s_{\tau,\ell}=\zeta_{\tau,\ell}\mathbf u_{\mathsf r_{\tau,\ell}}$, the transmitted signal becomes $\mathbf x_{\tau,\ell}=\mathbf W_\tau\mathbf s_{\tau,\ell}$, i.e., block-level linear precoding designed under the same robust pattern-wise CI requirements. Thus, PS-BLP retains interference exploitation, consistent with CI-based block-level precoding~\cite{Wang2025CIBLP}. Its distinction from PS-SLP is the common linear mapping across all relative patterns. This reduces the precoding variables to $TNK$ complex coefficients while retaining all relative-pattern constraints. Since \eqref{eq:blp} is linear in $\mathbf W_\tau$, the precoding update remains convex and Algorithm~\ref{alg:ao} applies unchanged. U-SLP and U-BLP fix $\pi_{\tau,r}=1/R$ and optimize only the precoding vectors of the respective designs. Comparing each proposed design with its uniform counterpart isolates the benefit of shaping, whereas comparing U-SLP with U-BLP assesses the benefit of independent pattern-wise precoding over a common CI-constrained linear precoder.

Finally, to isolate the benefit of allocating entropy across blocks, we consider a per-block entropy benchmark, which replaces $\mathrm C_5$ and $\mathrm C_6$ by $H(\boldsymbol\pi_\tau)\geq\bar H$ for every block. Each block must then meet the entropy requirement on its own, so entropy cannot be shifted from blocks with unfavorable geometry to blocks with favorable geometry. Since its feasible set is contained in that of \eqref{prob:main} when $H_{\rm floor}\leq\bar H$, it cannot achieve a lower optimal $C_{\mathrm E}$.

\section{Numerical Results}
\label{Num}

This section evaluates the proposed designs and certificates. We first describe the simulation setup, then compare the certificates with the exact mixture, and finally examine the effects of the communication requirements, the uncertainty, the input entropy, the satellite elevation, the power budget, and the session length.

We use $\mathcal M=4$ (QPSK), which is employed in the low-SNR modes of DVB-S2~\cite{DVBS2}.
Unless otherwise stated, the simulation parameters are $N=24$, $K=3$, $T=3$, and $L=10$, giving $R=16$ relative patterns and $n_{\mathrm{obs}}=30$ active channel uses. The satellite altitude is $600$~km and the carrier frequency is $20$~GHz, consistent with the LEO Ka-band scenario in \cite{3GPP38811}. The bandwidth is $20$~MHz, with $T_s=50$~ns. The satellite element gain is $10\log G_s=6$~dBi, while the receive-antenna gains are $10\log G_k=10\log G_W=36.7$~dBi. The additional propagation loss is $\Lambda_{u,\tau}=2$~dB for every link and transmission block. 
The receiver-noise variances are $\sigma_k^2=\sigma_w^2=6.69\times10^{-14}$~W, and we set $\sigma_{w,\min}^2=\sigma_w^2$.

The satellite follows a northbound circular pass along the meridian through the ground-coordinate origin, with the array axis $\mathbf o_{a,\tau}$ parallel to its instantaneous velocity. Its position at the session midpoint is specified by the elevation angle $\theta_{\mathrm{el}}^{\mathrm{mid}}$, measured from the local horizon at the origin, with $90^\circ$ corresponding to the satellite directly overhead. The nominal value is $\theta_{\mathrm{el}}^{\mathrm{mid}}=60^\circ$, with the satellite north of the origin, and the block centers are separated by $2$~s. 
Thus, each block lasts $LT_s=0.5~\mu$s, so the session consists of short bursts distributed over the pass. Such short transmissions are inherent in covert signaling, since the square-root law limits the number of channel uses that can remain covert \cite{Bash2013}, as also quantified for the proposed certificates in Remark~\ref{rem:length_scaling} and Fig.~\ref{fig:satellite_5}. 
The Bobs' local east--north coordinates, in kilometers, are $(-30,-105)$, $(20,-25)$, and $(-10,105)$. Willie's uncertainty region is centered at $(70,35)$ and has dimensions $50\alpha_W\times40\alpha_W$~km$^2$. It is partitioned into $1\times1$~km cells on a grid aligned with the region center, with boundary cells extended to full cells, giving $Q=520$, $2000$, and $4560$ for $\alpha_W=0.5$, $1$, and $1.5$, respectively. The leakage envelope of each cell uses $I_{\mathrm a}=4$ reference directions.

The nominal design parameters are $\gamma_{\mathrm{req}}=4$~dB, $\kappa_B=2\%$, $\bar H/\log_2R=0.75$, $H_{\mathrm{floor}}=0$, $\pi_{\min}=10^{-3}$, and $\alpha_W=1$. All RF chains share the same limits, $P_n^{\rm pk}=P^{\mathrm{pk}}=0.5$~W and $P_{n,\tau}^{\rm av}=P^{\mathrm{av}}=P^{\mathrm{pk}}/2$. The session energy budget is $E_{\mathrm{sess}}=\beta_E Nn_{\mathrm{obs}}T_sP^{\mathrm{av}}$, with $\beta_E=1$ unless otherwise stated. For $\beta_E=1$, the energy constraint is implied by the per-RF-chain average-power constraints, whereas $\beta_E<1$ imposes an additional session-wide limitation.

The optimization is implemented in MATLAB using CVX and MOSEK, with three probability initializations. Across each parameter sweep, previously obtained designs are also used to initialize subsequent operating points whenever they remain feasible under the new parameter values. 
Except for the certificate comparison in Fig.~\ref{fig:satellite_4}, the figures report $\epsilon_{\min}^{\mathrm{cert}}$, defined in Remark~\ref{rem:length_scaling}, which is computed from $C_{\mathrm E}$ at the retained feasible solution. A smaller value gives a stronger certified detection-error guarantee. Values above $1$ are omitted, and the horizontal line at $\epsilon=1$ marks the limit beyond which the certificate is vacuous. The tighter certificate $C_{\mathrm{sinh}}$ provides an additional post-design guarantee, illustrated in Fig.~\ref{fig:satellite_4}.

\begin{figure}[t]
\centering
\begin{tikzpicture}
\begin{axis}[
    width=0.77\linewidth,
    height=0.69\linewidth,
    ylabel={$\epsilon$},
    ymin=0,
    scaled x ticks=false,
    scaled y ticks=false,
    x tick label style={
        /pgf/number format/fixed,
        /pgf/number format/precision=0,
        font=\scriptsize
    },
    y tick label style={
        /pgf/number format/fixed,
        /pgf/number format/precision=2,
        font=\scriptsize
    },
    grid=both,
    minor grid style={gray!20},
    major grid style={gray!45},
    label style={font=\scriptsize},
    unbounded coords=jump,
    legend columns=1,
    legend cell align=left,
    legend style={
        font=\tiny,
        at={(0.0,1)},
        anchor=north west,
        draw=black,
        fill=white,
        fill opacity=0.8,
        text opacity=1,
        inner sep=2pt,
        row sep=-1pt,
        column sep=3pt
    },
    legend image post style={line width=0.7pt,scale=0.8},
    xlabel={$\gamma_{\mathrm{req}}$ (dB)},
    xmin=2,xmax=8,xtick={2,3,4,5,6,7,8},
    ymax=0.2,ytick={0,0.05,0.10,0.15,0.20,0.25,0.30,0.35,0.40}
]
\addplot[
    black, mark=o,
    mark options={solid,fill=white},
    mark size=1.7pt,
    line width=1pt,
    solid,
    forget plot
]
table[x=GammaDb,y=EpsilonCS,col sep=space]
{Figures4/fig4_data.dat};

\addplot[
    red, mark=o,
    mark options={solid,fill=white},
    mark size=1.7pt,
    line width=1pt,
    solid,
    forget plot
]
table[x=GammaDb,y=EpsilonCE,col sep=space]
{Figures4/fig4_data.dat};

\addplot[
    green, mark=o,
    mark options={solid,fill=white},
    mark size=1.7pt,
    line width=1pt,
    solid,
    forget plot
]
table[x=GammaDb,y=EpsilonGrid,col sep=space]
{Figures4/fig4_data.dat};

\addplot[
    blue,
    mark=o,
    mark options={solid,fill=white},
    mark size=1.8pt,
    line width=1pt,
    solid,
    forget plot
]
table[x=GammaDb,y=EpsilonCsinh,col sep=space]
{Figures4/fig4_data.dat};

\addlegendimage{black,solid,mark=o,mark options={solid,fill=white}}
\addlegendentry{$C_{\mathrm S}$ certificate}
\addlegendimage{red,solid,mark=o,mark options={solid,fill=white}}
\addlegendentry{$C_{\mathrm E}$ certificate}
\addlegendimage{blue,solid,mark=o,
mark options={solid,fill=white}}
\addlegendentry{$C_{\mathrm{sinh}}$ certificate}
\addlegendimage{green,solid,mark=o,mark options={solid,fill=white}}
\addlegendentry{Same design, exact (sampled grid)}
\end{axis}
\end{tikzpicture}
\caption{Covertness parameter ($\epsilon$) versus the CI QoS threshold for different certificates and the sampled exact-mixture diagnostic, evaluated for the same PS-SLP designs.}
\label{fig:satellite_4}
\end{figure}

Fig.~\ref{fig:satellite_4} compares the certificates of Section~\ref{sec:covertness} for the same PS-SLP designs. The proposed certificate $C_{\mathrm E}$ is $17$--$28\%$ tighter than $C_{\mathrm S}$ over the entire range, in line with the low-leakage analysis, since it avoids the variance penalty of the simpler bound. Evaluating $C_{\mathrm{sinh}}$ for the same designs tightens the certificate by a further $5$--$14\%$, with the largest improvement at low QoS thresholds. To assess how conservative the certificates are, the figure also evaluates the same designs, without any reoptimization, through the exact divergence of Proposition~\ref{prop:exact}, maximized over an $81\times65$ grid of Willie locations. At these locations, the resulting $\epsilon$ is $2.1$ to $4.9$ times lower than the one certified by $C_{\mathrm E}$. Hence, the actual covertness of the proposed designs is considerably better than what the certificate guarantees. This margin is the price of guaranteeing covertness at every position of the continuous region rather than at sampled points, and it narrows as the QoS threshold increases. All curves increase with $\gamma_{\mathrm{req}}$, since stricter CI requirements leave less freedom for suppressing leakage toward Willie.

\begin{figure}[t]
\centering
\begin{tikzpicture}
\begin{axis}[
    width=0.77\linewidth,
    height=0.69\linewidth,
    ylabel={$\epsilon_{\min}^{\mathrm{cert}}$},
    ymin=0,
    scaled x ticks=false,
    scaled y ticks=false,
    x tick label style={
        /pgf/number format/fixed,
        /pgf/number format/precision=0,
        font=\scriptsize
    },
    y tick label style={
        /pgf/number format/fixed,
        /pgf/number format/precision=1,
        font=\scriptsize
    },
    grid=both,
    minor grid style={gray!20},
    major grid style={gray!45},
    label style={font=\scriptsize},
    unbounded coords=jump,
    legend columns=2,
    legend cell align=left,
    legend style={
        font=\tiny,
        at={(0.0,1)},
        anchor=north west,
        draw=black,
        fill=white,
        fill opacity=0.8,
        text opacity=1,
        inner sep=2pt,
        row sep=-1pt,
        column sep=3pt
    },
    legend image post style={line width=0.7pt,scale=0.8},
 xlabel={$\gamma_{\mathrm{req}}$ (dB)},
 xmin=2,xmax=8,xtick={2,3,4,5,6,7,8},ymax=1.05,ytick={0,0.2,0.4,0.6,0.8,1}
]
\addplot[gray!65,densely dotted,line width=0.8pt,forget plot] coordinates {(2,1) (8,1)};

\addplot[
    blue, mark=o,
    mark options={solid,fill=white},
    mark size=1.7pt,
    line width=1pt,
    solid,
    forget plot
]
table[x=GammaReq_PS_SLP_k0,y=PS_SLP_k0,col sep=space]
{Figures4/fig1_data.dat};

\addplot[
    blue, mark=square,
    mark options={solid,fill=white},
    mark size=1.7pt,
    line width=1pt,
    dashed,
    forget plot
]
table[x=GammaReq_U_SLP_k0,y=U_SLP_k0,col sep=space]
{Figures4/fig1_data.dat};

\addplot[
    red, mark=o,
    mark options={solid,fill=white},
    mark size=1.7pt,
    line width=1pt,
    solid,
    forget plot
]
table[x=GammaReq_PS_SLP_k2,y=PS_SLP_k2,col sep=space]
{Figures4/fig1_data.dat};

\addplot[
    red, mark=square,
    mark options={solid,fill=white},
    mark size=1.7pt,
    line width=1pt,
    dashed,
    forget plot
]
table[x=GammaReq_U_SLP_k2,y=U_SLP_k2,col sep=space]
{Figures4/fig1_data.dat};

\addplot[
    red, mark=triangle,
    mark options={solid,fill=white},
    mark size=1.7pt,
    line width=1pt,
    dotted,
    forget plot
]
table[x=GammaReq_PS_BLP_k2,y=PS_BLP_k2,col sep=space]
{Figures4/fig1_data.dat};

\addplot[
    red, mark=diamond,
    mark options={solid,fill=white},
    mark size=1.7pt,
    line width=1pt,
    dashdotted,
    mark indices={1,2,3,4,5,6,7},
    forget plot
]
table[x=GammaReq_U_BLP_k2,y=U_BLP_k2,col sep=space]
{Figures4/fig1_data.dat};

\addplot[
    black, mark=o,
    mark options={solid,fill=white},
    mark size=1.25pt,
    line width=1pt,
    solid,
    forget plot
]
table[x=GammaReq_PS_SLP_k5,y=PS_SLP_k5,col sep=space]
{Figures4/fig1_data.dat};

\addplot[
    black, mark=square,
    mark options={solid,fill=white},
    mark size=1.7pt,
    line width=1pt,
    dashed,
    mark indices={1,2,3,4,5,6},
    forget plot
]
table[x=GammaReq_U_SLP_k5,y=U_SLP_k5,col sep=space]
{Figures4/fig1_data.dat};

\addplot[black,only marks,mark=square*,mark size=2pt,forget plot]
coordinates {(7.105171999035,1)};

\addlegendimage{black,solid,mark=o,mark options={solid,fill=white}}
\addlegendentry{PS-SLP}
\addlegendimage{black,dashed,mark=square,mark options={solid,fill=white}}
\addlegendentry{U-SLP}
\addlegendimage{black,dotted,mark=triangle,mark options={solid,fill=white}}
\addlegendentry{PS-BLP}
\addlegendimage{black,dashdotted,mark=diamond,mark options={solid,fill=white}}
\addlegendentry{U-BLP}
\addlegendimage{blue,solid,mark=none,line width=0.9pt}
\addlegendentry{$\kappa_B=0\%$}
\addlegendimage{red,solid,mark=none,line width=0.9pt}
\addlegendentry{$\kappa_B=2\%$}
\addlegendimage{black,solid,mark=none,line width=0.9pt}
\addlegendentry{$\kappa_B=5\%$}
\end{axis}
\end{tikzpicture}
\caption{Minimum certified covertness level ($\epsilon_{\min}^{\mathrm{cert}}$) versus the CI QoS threshold under Bob CSI uncertainty.}
\label{fig:satellite_1}
\end{figure}

Fig.~\ref{fig:satellite_1} shows $\epsilon_{\min}^{\mathrm{cert}}$ versus the CI QoS threshold for $\kappa_B\in\{0\%,2\%,5\%\}$. Higher thresholds require larger received margins at the Bobs, which leaves less freedom to suppress leakage toward Willie. Bob-channel uncertainty has the same effect, since the CI requirements must hold for every channel error. PS-SLP achieves the lowest certified level in all cases. At the nominal point ($\gamma_{\mathrm{req}}=4$~dB, $\kappa_B=2\%$), it certifies $\epsilon=0.034$, compared with $0.078$ for U-SLP, $0.045$ for PS-BLP, and $0.090$ for U-BLP. Comparing each shaped design with its uniform counterpart shows that shaping reduces the certified level by about half for both precoding structures. Notably, PS-BLP outperforms U-SLP over the whole range, i.e., in this setting, shaping alone yields a larger gain than symbol-level precoding alone. Shaping also extends the range of thresholds that can be certified. For $\kappa_B=5\%$, U-SLP reaches the vacuous level $\epsilon=1$ above $7$~dB, whereas PS-SLP still certifies $\epsilon\approx0.26$ at $8$~dB.

\begin{figure}[t]
\centering
\begin{tikzpicture}
\begin{axis}[
    width=0.77\linewidth,
    height=0.69\linewidth,
    ylabel={$\epsilon_{\min}^{\mathrm{cert}}$},
    ymin=0,
    scaled x ticks=false,
    scaled y ticks=false,
    x tick label style={
        /pgf/number format/fixed,
        /pgf/number format/precision=2,
        font=\scriptsize
    },
    y tick label style={
        /pgf/number format/fixed,
        /pgf/number format/precision=1,
        font=\scriptsize
    },
    grid=both,
    minor grid style={gray!20},
    major grid style={gray!45},
    label style={font=\scriptsize},
    unbounded coords=jump,
    legend columns=2,
    legend cell align=left,
    legend style={
        font=\tiny,
        at={(0.0,0.85)},
        anchor=north west,
        draw=black,
        fill=white,
        fill opacity=0.8,
        text opacity=1,
        inner sep=2pt,
        row sep=-1pt,
        column sep=3pt
    },
    legend image post style={line width=0.7pt,scale=0.8},
 xlabel={$\bar H/\log_2 R$},
 xmin=0.6,xmax=1,xtick={0.6,0.7,0.8,0.9,1.0},ymax=0.4,ytick={0,0.2,0.4,0.6,0.8,1}
]
\addplot[gray!65,densely dotted,line width=0.8pt,forget plot] coordinates {(0.6,1) (1,1)};
\addplot[
    blue, mark=o,
    mark options={solid,fill=white},
    mark size=1.7pt,
    line width=1pt,
    solid,
    forget plot
]
table[x=eta_PS_SLP_a05,y=PS_SLP_a05,col sep=space]
{Figures4/fig2_data.dat};

\addplot[
    blue, mark=square,
    mark options={solid,fill=white},
    mark size=1.7pt,
    line width=1pt,
    dashed,
    forget plot
]
table[x=eta_U_SLP_a05,y=U_SLP_a05,col sep=space]
{Figures4/fig2_data.dat};

\addplot[
    red, mark=o,
    mark options={solid,fill=white},
    mark size=1.7pt,
    line width=1pt,
    solid,
    forget plot
]
table[x=eta_PS_SLP_a10,y=PS_SLP_a10,col sep=space]
{Figures4/fig2_data.dat};

\addplot[
    red, mark=square,
    mark options={solid,fill=white},
    mark size=1.7pt,
    line width=1pt,
    dashed,
    forget plot
]
table[x=eta_U_SLP_a10,y=U_SLP_a10,col sep=space]
{Figures4/fig2_data.dat};

\addplot[
    red, mark=triangle,
    mark options={solid,fill=white},
    mark size=1.7pt,
    line width=1pt,
    dotted,
    forget plot
]
table[x=eta_PS_BLP_a10,y=PS_BLP_a10,col sep=space]
{Figures4/fig2_data.dat};

\addplot[
    red, mark=diamond,
    mark options={solid,fill=white},
    mark size=1.7pt,
    line width=1pt,
    dashdotted,
    forget plot
]
table[x=eta_U_BLP_a10,y=U_BLP_a10,col sep=space]
{Figures4/fig2_data.dat};

\addplot[
    black, mark=o,
    mark options={solid,fill=white},
    mark size=1.7pt,
    line width=1pt,
    solid,
    forget plot
]
table[x=eta_PS_SLP_a15,y=PS_SLP_a15,col sep=space]
{Figures4/fig2_data.dat};

\addplot[
    black, mark=square,
    mark options={solid,fill=white},
    mark size=1.7pt,
    line width=1pt,
    dashed,
    forget plot
]
table[x=eta_U_SLP_a15,y=U_SLP_a15,col sep=space]
{Figures4/fig2_data.dat};

\addlegendimage{black,solid,mark=o,mark options={solid,fill=white}}
\addlegendentry{PS-SLP}
\addlegendimage{black,dashed,mark=square,mark options={solid,fill=white}}
\addlegendentry{U-SLP}
\addlegendimage{black,dotted,mark=triangle,mark options={solid,fill=white}}
\addlegendentry{PS-BLP}
\addlegendimage{black,dashdotted,mark=diamond,mark options={solid,fill=white}}
\addlegendentry{U-BLP}
\addlegendimage{blue,solid,mark=none,line width=0.9pt}
\addlegendentry{$\alpha_W=0.5$}
\addlegendimage{red,solid,mark=none,line width=0.9pt}
\addlegendentry{$\alpha_W=1.0$}
\addlegendimage{black,solid,mark=none,line width=0.9pt}
\addlegendentry{$\alpha_W=1.5$}
\end{axis}
\end{tikzpicture}
\caption{Minimum certified covertness level ($\epsilon_{\min}^{\mathrm{cert}}$) versus the normalized entropy requirement under Willie-location uncertainty.}
\label{fig:satellite_2}
\end{figure}


Fig.~\ref{fig:satellite_2} shows the effect of the normalized entropy requirement for $\alpha_W\in\{0.5,1,1.5\}$, i.e., Willie uncertainty regions of $25\times20$, $50\times40$, and $75\times60$~km$^2$. A larger $\bar H$ forces the probabilities closer to uniform and leaves less freedom to favor low-leakage patterns, so the certified level of the shaped designs increases. At $\bar H/\log_2R=1$, only the uniform distribution is feasible, and each shaped design coincides with its uniform counterpart, whose certified level does not depend on $\bar H$. A larger uncertainty region raises the certified level, since the precoding vectors must suppress leakage over more locations. However, the benefit of shaping grows with the region. At the nominal entropy, PS-SLP reduces the certified level relative to U-SLP by $40\%$ for $\alpha_W=0.5$, $56\%$ for $\alpha_W=1$, and $63\%$ for $\alpha_W=1.5$. When spatial suppression alone becomes harder, making the high-leakage patterns rare becomes increasingly valuable.

\begin{figure}[t]
\centering
\pgfplotstableread[col sep=space]{Figures4/remark2_data.dat}\remarkData
\pgfplotstablesort[sort key=psi1]\remarkOne{\remarkData}
\pgfplotstablesort[sort key=psi2]\remarkTwo{\remarkData}
\pgfplotstablesort[sort key=psi3]\remarkThree{\remarkData}
\begin{tikzpicture}
\begin{semilogyaxis}[
    width=0.77\linewidth,
    height=0.69\linewidth,
    xlabel={Pattern cost, $\psi_{\tau,r}$},
    ylabel={Pattern probability, $\pi_{\tau,r}$},
    xmin=0.004,xmax=0.018,
    ymin=0.0008,ymax=1,
    xtick={0.004,0.006,0.008,0.010,0.012,0.014,0.016,0.018},
    ytick={0.001,0.01,0.1,1},
    scaled x ticks=false,
    x tick label style={
        /pgf/number format/fixed,
        /pgf/number format/precision=3,
        /pgf/number format/fixed zerofill,
        font=\scriptsize
    },
    y tick label style={font=\scriptsize},
    label style={font=\scriptsize},
    grid=both,
    minor grid style={gray!20},
    major grid style={gray!45},
    unbounded coords=jump,
    every axis plot/.append style={line width=1pt,mark size=1.7pt},
    legend columns=1,
    legend cell align=left,
    legend style={
        font=\tiny,
        at={(1,1)},anchor=north east,
        draw=black,fill=white,fill opacity=0.8,text opacity=1,
        inner sep=2pt,row sep=-1pt
    },
    legend image post style={line width=0.7pt,scale=0.8}
]
\addplot[gray!65,densely dotted,line width=0.8pt,forget plot]
    coordinates {(0.004,0.001) (0.018,0.001)};

\addplot[blue,solid,mark=none,forget plot]
    table[x=psi1,y expr={\thisrow{piKKT1}<0.001 ? nan : \thisrow{piKKT1}}]{\remarkOne};
\addplot[red,solid,mark=none,forget plot]
    table[x=psi2,y expr={\thisrow{piKKT2}<0.001 ? nan : \thisrow{piKKT2}}]{\remarkTwo};
\addplot[black,solid,mark=none,forget plot]
    table[x=psi3,y expr={\thisrow{piKKT3}<0.001 ? nan : \thisrow{piKKT3}}]{\remarkThree};

\addplot[blue,only marks,mark=o,mark options={solid,fill=white},forget plot]
    table[x=psi1,y=piFree1]{\remarkData};
\addplot[red,only marks,mark=o,mark options={solid,fill=white},forget plot]
    table[x=psi2,y=piFree2]{\remarkData};
\addplot[black,only marks,mark=o,mark options={solid,fill=white},forget plot]
    table[x=psi3,y=piFree3]{\remarkData};

\addplot[red,only marks,mark=x,mark size=2pt,forget plot]
    table[x=psi2,y=piFloor2]{\remarkData};
\addplot[black,only marks,mark=x,mark size=2pt,forget plot]
    table[x=psi3,y=piFloor3]{\remarkData};

\addlegendimage{blue,only marks,mark=o,mark options={solid,fill=white}}
\addlegendentry{$\tau=1$, $H(\boldsymbol\pi_1)=3.195$ bits}
\addlegendimage{red,only marks,mark=o,mark options={solid,fill=white}}
\addlegendentry{$\tau=2$, $H(\boldsymbol\pi_2)=3.084$ bits}
\addlegendimage{black,only marks,mark=o,mark options={solid,fill=white}}
\addlegendentry{$\tau=3$, $H(\boldsymbol\pi_3)=2.721$ bits}
\addlegendimage{black,solid,mark=none}
\addlegendentry{KKT prediction}
\addlegendimage{black,only marks,mark=x}
\addlegendentry{Probability floor}
\end{semilogyaxis}
\end{tikzpicture}
\caption{Optimized pattern probabilities ($\pi_{\tau,r}$) versus marginal pattern costs ($\psi_{\tau,r}$) for fixed nominal PS-SLP precoding vectors.}
\label{fig:satellite_gibbs}
\end{figure}

Fig.~\ref{fig:satellite_gibbs} reveals how this benefit arises by showing the optimized pattern probabilities of the nominal PS-SLP design versus their costs $\psi_{\tau,r}$ in \eqref{eq:psi}. In every block, the probabilities above the floor lie on a straight line in logarithmic scale and match the prediction of Remark~\ref{rem:gibbs}. Patterns that are inexpensive to deliver with low leakage are used most often, with probabilities up to about $0.4$, whereas the most expensive ones are driven to the floor $\pi_{\min}$, which is reached by two patterns in block~2 and four in block~3. For comparison, the uniform design assigns probability $1/16\approx0.06$ to every pattern. The three lines are parallel, i.e., all blocks share the same temperature $\vartheta$, as expected from Remark~\ref{rem:gibbs} when only the session entropy constraint $\mathrm C_5$ is active. The blocks therefore differ only in their costs, which are determined by the geometry. The patterns of block~3 are more expensive overall, so this block is assigned a more concentrated distribution, with entropy $2.72$~bits, compared to $3.20$ and $3.08$~bits for blocks~1 and~2. These values average exactly to $\bar H=3$~bits, so the session constraint allocates entropy to the blocks where its impact on the covertness certificate is smallest. This also explains the trend in Fig.~\ref{fig:satellite_2}, since relaxing $\bar H$ lowers $\vartheta$, steepens the lines, and makes the expensive patterns even rarer.

\begin{table}[t]
\centering
\caption{PS-SLP performance under additional constraints.}
\label{tab:additional_constraints}
\setlength{\tabcolsep}{3pt}
\renewcommand{\arraystretch}{1.15}
\scriptsize
\begin{tabular}{lccccc}
\hline
Case
& $\epsilon_{\min}^{\mathrm{cert}}$
& $E_{\mathrm{used}}/E_{\mathrm{sess}}$
& $H_1$ & $H_2$ & $H_3$ \\
\hline
Nominal
& 0.03422 & 0.822 & 3.195 & 3.084 & 2.721 \\
Reduced energy
& 0.03623 & 1.000 & 3.278 & 3.025 & 2.697 \\
Per-block entropy
& 0.03498 & 0.826 & 3.000 & 3.000 & 3.000 \\
Channel-model error
& 0.03855 & 0.780 & 3.252 & 3.007 & 2.742 \\
\hline
\end{tabular}
\end{table}

Table~\ref{tab:additional_constraints} quantifies the effect of additional constraints on the nominal PS-SLP design, where $H_\tau=H(\boldsymbol\pi_\tau)$ is the entropy of block $\tau$ in bits and $E_{\mathrm{used}}$ is the expected transmit energy. Each case changes one setting of the nominal design, which uses $\beta_E=1$, $H_{\mathrm{floor}}=0$, and $\delta_{\tau,q}^{\mathrm{mod}}=0$, and reoptimizes it. Setting $H_{\mathrm{floor}}=\bar H=3$~bits yields the per-block entropy benchmark of Section~\ref{subsec:benchmarks}. It forces block~3, whose geometry is the least favorable, to $3$~bits and increases $\epsilon_{\min}^{\mathrm{cert}}$ by about $2\%$, which quantifies the benefit of allocating entropy across blocks. Halving the energy budget ($\beta_E=0.5$) makes the energy constraint active and increases $\epsilon_{\min}^{\mathrm{cert}}$ by about $6\%$. Finally, a channel-model error with $\delta_{\tau,q}^{\mathrm{mod}}=10^{-3}\beta_{\tau,q}^{\max}$, i.e., $0.1\%$ of Willie's channel amplitude, increases $\epsilon_{\min}^{\mathrm{cert}}$ by about $13\%$. Since the corresponding margin grows with $\|\mathbf v_{\tau,r}\|$, the design also lowers its energy use, from $82\%$ to $78\%$ of the budget. All cases remain feasible and the certified level changes only moderately.

\begin{figure}[t]
\centering
\begin{tikzpicture}
\begin{axis}[
    width=0.77\linewidth,
    height=0.69\linewidth,
    ylabel={$\epsilon_{\min}^{\mathrm{cert}}$},
    ymin=0,
    scaled x ticks=false,
    scaled y ticks=false,
    x tick label style={
        /pgf/number format/fixed,
        /pgf/number format/precision=0,
        font=\scriptsize
    },
    y tick label style={
        /pgf/number format/fixed,
        /pgf/number format/precision=1,
        font=\scriptsize
    },
    grid=both,
    minor grid style={gray!20},
    major grid style={gray!45},
    label style={font=\scriptsize},
    unbounded coords=jump,
    legend columns=2,
    legend cell align=left,
    legend style={
        font=\scriptsize,
        at={(1,1)},
        anchor=north east,
        draw=black,
        fill=white,
        fill opacity=0.8,
        text opacity=1,
        inner sep=2pt,
        row sep=-1pt,
        column sep=3pt
    },
    legend image post style={line width=0.7pt,scale=0.8},
 xlabel={$\theta_{\mathrm{el}}^{\mathrm{mid}}$ (deg)},
 xmin=60,xmax=80,xtick={60,65,70,75,80},ymax=0.5,ytick={0,0.2,0.4,0.6,0.8,1}
]
\addplot[gray!65,densely dotted,line width=0.8pt,forget plot] coordinates {(60,1) (80,1)};

\addplot[
    blue, mark=o,
    mark options={solid,fill=white},
    mark size=1.7pt,
    line width=1pt,
    solid,
    forget plot
]
table[x=elev_PS_SLP_g2,y=PS_SLP_g2,col sep=space]
{Figures4/fig3_data.dat};

\addplot[
    blue, mark=square,
    mark options={solid,fill=white},
    mark size=1.7pt,
    line width=1pt,
    dashed,
    forget plot
]
table[x=elev_U_SLP_g2,y=U_SLP_g2,col sep=space]
{Figures4/fig3_data.dat};

\addplot[
    red, mark=o,
    mark options={solid,fill=white},
    mark size=1.7pt,
    line width=1pt,
    solid,
    forget plot
]
table[x=elev_PS_SLP_g5,y=PS_SLP_g5,col sep=space]
{Figures4/fig3_data.dat};

\addplot[
    red, mark=square,
    mark options={solid,fill=white},
    mark size=1.7pt,
    line width=1pt,
    dashed,
    forget plot
]
table[x=elev_U_SLP_g5,y=U_SLP_g5,col sep=space]
{Figures4/fig3_data.dat};

\addplot[
    black, mark=o,
    mark options={solid,fill=white},
    mark size=1.7pt,
    line width=1pt,
    solid,
    forget plot
]
table[x=elev_PS_SLP_g8,y=PS_SLP_g8_preview,col sep=space]
{Figures4/fig3_data.dat};

\addplot[
    black, mark=square,
    mark options={solid,fill=white},
    mark size=1.7pt,
    line width=1pt,
    dashed,
    mark indices={1,2,3,4,5,6,7,8,9},
    forget plot
]
table[x=elev_U_SLP_g8,y=U_SLP_g8,col sep=space]
{Figures4/fig3_data.dat};


\addlegendimage{black,solid,mark=o,mark options={solid,fill=white}}
\addlegendentry{PS-SLP}
\addlegendimage{black,dashed,mark=square,mark options={solid,fill=white}}
\addlegendentry{U-SLP}
\addlegendimage{blue,solid,mark=none,line width=0.9pt}
\addlegendentry{$\gamma_{\mathrm{req}}=2$ dB}
\addlegendimage{red,solid,mark=none,line width=0.9pt}
\addlegendentry{$\gamma_{\mathrm{req}}=5$ dB}
\addlegendimage{black,solid,mark=none,line width=0.9pt}
\addlegendentry{$\gamma_{\mathrm{req}}=8$ dB}
\end{axis}
\end{tikzpicture}
\caption{Minimum certified covertness level ($\epsilon_{\min}^{\mathrm{cert}}$) versus the mid-session satellite elevation for different CI QoS thresholds.}
\label{fig:satellite_3}
\end{figure}


Fig.~\ref{fig:satellite_3} shows the effect of the satellite elevation at the session midpoint, from $60^\circ$ to $80^\circ$, for $\gamma_{\mathrm{req}}\in\{2,5,8\}$~dB, with the precoding vectors redesigned at each elevation. A higher elevation shortens slant ranges, which reduces the power needed to serve the Bobs but also strengthens Willie's channel. In this deployment, it also increases the separation between the Bobs and Willie's region in direction cosine, improving leakage suppression. This improved spatial discrimination and the reduced transmit-power requirement outweigh Willie's stronger channel, so the certified level decreases with elevation for all thresholds.
The decrease is strongest at $\gamma_{\mathrm{req}}=8$~dB, where the certified level drops from $0.14$ to $0.020$ for PS-SLP and from $0.42$ to $0.026$ for U-SLP between $60^\circ$ and $80^\circ$. PS-SLP outperforms U-SLP at every elevation, with the largest gain at low elevation, where spatial discrimination is weakest. At $60^\circ$, it reduces the certified level by $58$--$66\%$ across the three thresholds. At $80^\circ$, the reduction is about $10\%$ for $\gamma_{\mathrm{req}}\in\{2,5\}$~dB but $24\%$ for $\gamma_{\mathrm{req}}=8$~dB. Shaping is therefore most valuable when the geometry alone does not separate the Bobs from Willie and when the QoS requirement is strict.

\begin{figure}[t]
\centering
\begin{tikzpicture}
\begin{axis}[
    width=0.77\linewidth,
    height=0.69\linewidth,
    ylabel={$\epsilon_{\min}^{\mathrm{cert}}$},
    ymin=0,
    scaled x ticks=false,
    scaled y ticks=false,
    x tick label style={
        /pgf/number format/fixed,
        /pgf/number format/precision=0,
        font=\scriptsize
    },
    y tick label style={
        /pgf/number format/fixed,
        /pgf/number format/precision=1,
        font=\scriptsize
    },
    grid=both,
    minor grid style={gray!20},
    major grid style={gray!45},
    label style={font=\scriptsize},
    unbounded coords=jump,
    legend columns=2,
    legend cell align=left,
    legend style={
        font=\scriptsize,
        at={(0.00,1)},
        anchor=north west,
        draw=black,
        fill=white,
        fill opacity=0.8,
        text opacity=1,
        inner sep=2pt,
        row sep=-1pt,
        column sep=3pt
    },
    legend image post style={line width=0.7pt,scale=0.8},
    xlabel={Number of active channel uses, $n_{\mathrm{obs}}$},
    xmin=0,xmax=90,
    xtick={0,15,30,45,60,75,90},
    ymax=0.6,
    ytick={0,0.2,0.4,0.6,0.8,1}
]
\addplot[gray!65,densely dotted,line width=0.8pt,forget plot] coordinates {(0,1) (90,1)};

\addplot[
    blue, mark=o,
    mark options={solid,fill=white},
    mark size=1.7pt,
    line width=1pt,
    solid,
    forget plot
]
table[x=n_PS_SLP_p015,y=PS_SLP_p015,col sep=space]
{Figures4/fig5_data.dat};

\addplot[
    blue, mark=square,
    mark options={solid,fill=white},
    mark size=1.7pt,
    line width=1pt,
    dashed,
    forget plot
]
table[x=n_U_SLP_p015,y=U_SLP_p015,col sep=space]
{Figures4/fig5_data.dat};

\addplot[
    red, mark=o,
    mark options={solid,fill=white},
    mark size=1.7pt,
    line width=1pt,
    solid,
    forget plot
]
table[x=n_PS_SLP_p030,y=PS_SLP_p030,col sep=space]
{Figures4/fig5_data.dat};

\addplot[
    red, mark=square,
    mark options={solid,fill=white},
    mark size=1.7pt,
    line width=1pt,
    dashed,
    forget plot
]
table[x=n_U_SLP_p030,y=U_SLP_p030,col sep=space]
{Figures4/fig5_data.dat};

\addplot[
    black, mark=o,
    mark options={solid,fill=white},
    mark size=1.7pt,
    line width=1pt,
    solid,
    forget plot
]
table[x=n_PS_SLP_p080,y=PS_SLP_p080,col sep=space]
{Figures4/fig5_data.dat};

\addplot[
    black, mark=square,
    mark options={solid,fill=white},
    mark size=1.7pt,
    line width=1pt,
    dashed,
    forget plot
]
table[x=n_U_SLP_p080,y=U_SLP_p080,col sep=space]
{Figures4/fig5_data.dat};

\addlegendimage{black,solid,mark=o,mark options={solid,fill=white}}
\addlegendentry{PS-SLP}
\addlegendimage{black,dashed,mark=square,mark options={solid,fill=white}}
\addlegendentry{U-SLP}
\addlegendimage{blue,solid,mark=none,line width=0.9pt}
\addlegendentry{$P^{\mathrm{pk}}=0.15$ W}
\addlegendimage{red,solid,mark=none,line width=0.9pt}
\addlegendentry{$P^{\mathrm{pk}}=0.30$ W}
\addlegendimage{black,solid,mark=none,line width=0.9pt}
\addlegendentry{$P^{\mathrm{pk}}=0.80$ W}
\end{axis}
\end{tikzpicture}
\caption{Minimum certified covertness level ($\epsilon_{\min}^{\mathrm{cert}}$) versus the number of active channel uses for different per-RF-chain peak-power limits, with $P^{\mathrm{av}}=P^{\mathrm{pk}}/2$.}
\label{fig:satellite_5}
\end{figure}

Fig.~\ref{fig:satellite_5} shows the certified level versus the number of active channel uses for $P^{\mathrm{pk}}\in\{0.15,0.30,0.80\}$~W, with $P^{\mathrm{av}}=P^{\mathrm{pk}}/2$. Here, $T=3$ and the block centers are fixed, while $L$ varies from $1$ to $30$, i.e., $n_{\mathrm{obs}}=3L$ between $3$ and $90$, and the energy budget scales with $n_{\mathrm{obs}}$. The certified level increases as Willie accumulates observations and, as predicted by Remark~\ref{rem:length_scaling}, follows exactly the curve $\frac12\sqrt{e^{c\,n_{\mathrm{obs}}}-1}$, since the same design is optimal for every $L$. For the nominal design ($P^{\mathrm{pk}}=0.5$~W), $c\approx1.6\times10^{-4}$, so by \eqref{eq:nmax} sessions of up to about $250$ channel uses can be certified for $\epsilon=0.1$, and up to about $1970$ for $\epsilon=0.3$. A larger power budget lowers the certified level, since it leaves more freedom to satisfy the CI requirements while suppressing leakage. As the budgets are upper limits, this does not require more transmit power, and the benefit diminishes beyond $0.30$~W. The gain of shaping is largest when power is scarce. At $n_{\mathrm{obs}}=30$, PS-SLP reduces the certified level relative to U-SLP by $71\%$ for $P^{\mathrm{pk}}=0.15$~W, and by about $57\%$ for $0.30$ and $0.80$~W.

\section{Conclusion}
\label{Conc}

This paper investigated covert multiuser LEO satellite downlinks, where a passive warden at an unknown but fixed ground location accumulates observations as the satellite moves. We proposed the PS-SLP and PS-BLP designs, which jointly optimize the precoding and the probabilities of relative multiuser patterns, while every legitimate user still observes uniformly distributed PSK symbols. We derived the exact chi-square divergence of the warden's observation over the complete session and, by bounding the leakage over a continuous location uncertainty region, obtained covertness certificates that hold at every warden position. The certified covertness level was minimized under robust CI, power, and entropy constraints through an alternating optimization whose updates are equivalent to convex conic problems. The optimized probabilities were shown to follow a Maxwell--Boltzmann-type law, which allocates the session entropy across blocks according to the satellite geometry. Numerical results showed that shaping substantially reduces the certified covertness level of both precoding structures, and that its benefit is largest when precoding alone has limited freedom to suppress the leakage, i.e., under large location uncertainty, at low elevation, under strict QoS requirements, and under scarce power. These findings establish the symbol statistics as a design resource for covertness in their own right. A covert LEO transmitter should therefore control not only where its signal is radiated but also how often each multiuser symbol combination is sent.

\appendices
\section{Proof of Proposition~\ref{prop:exact}}
\label{app:exact}
Expanding the square in \eqref{eq:chi} and using
$\int_{\mathbb C}p_{1,\tau}(y)\,dy
=\int_{\mathbb C}p_0(y)\,dy=1$ gives
$D_\tau=\int_{\mathbb C}p_{1,\tau}^2(y)/p_0(y)\,dy$.
For the Gaussian densities in \eqref{eq:gaussian_density}, completing the square yields
\begin{equation}
\int_{\mathbb C}\frac{p_a(y)p_b(y)}{p_0(y)}\,dy
=\exp\!\left(\frac{2\Re\{a^*b\}}{\sigma_w^2}\right).
\label{eq:gaussian_overlap}
\end{equation}
Substituting \eqref{eq:mixture}, we obtain
\begin{equation}
\begin{aligned}
D_\tau(\mathbf p_W)
={}&\sum_{r=1}^{R}\sum_{t=1}^{R}
\frac{\pi_{\tau,r}\pi_{\tau,t}}{\mathcal M^2}\\
&\times
\sum_{\zeta_1,\zeta_2\in\mathcal S}
\exp\!\left(
\frac{2}{\sigma_w^2}
\Re\{\zeta_1^*\zeta_2\mu_{\tau,r}^*\mu_{\tau,t}\}
\right).
\end{aligned}
\end{equation}
For each $\upsilon\in\mathcal S$, exactly $\mathcal M$ of the $\mathcal M^2$ pairs $(\zeta_1,\zeta_2)$ satisfy $\zeta_1^*\zeta_2=\upsilon$. Thus, with $z=\mu_{\tau,r}^*\mu_{\tau,t}/\sigma_w^2$,
\begin{equation}
\frac{1}{\mathcal M^2}
\sum_{\zeta_1,\zeta_2\in\mathcal S}
e^{2\Re\{\zeta_1^*\zeta_2z\}}
=\frac{1}{\mathcal M}
\sum_{\upsilon\in\mathcal S}e^{2\Re\{\upsilon z\}}
=\Phi_{\mathcal M}(z).
\end{equation}
Substitution proves \eqref{eq:exactD}.

\section{Proof of Lemma~\ref{lem:product}}
\label{app:kernel}
Since $\mathcal M$ is divisible by four, the PSK rotations can be grouped into $D_{\mathrm{rot}}=\mathcal M/4$ QPSK quadruples, giving
\begin{equation}
\Phi_{\mathcal M}(z)
=\frac{1}{D_{\mathrm{rot}}}\sum_{d=0}^{D_{\mathrm{rot}}-1}
\Phi_4(e^{\mathrm j2\pi d/\mathcal M}z),
\label{eq:kernel_grouping}
\end{equation}
where the auxiliary QPSK kernel is
\begin{equation}
\Phi_4(w)=\frac12
\left[\cosh(2\Re\{w\})+\cosh(2\Im\{w\})\right].
\label{eq:aux_qpsk_kernel}
\end{equation}
For $w=x+\mathrm jy$, the identity
$\cosh(2x)=1+2\sinh^2x$ yields
$
\Phi_4(w)-1=\sinh^2x+\sinh^2y\geq0.
$
The power series of $\sinh^2(\cdot)$ has nonnegative even-order coefficients. Since $x^{2\iota}+y^{2\iota}\leq(x^2+y^2)^\iota$ for every integer $\iota\geq1$, applying this inequality term-wise gives
$
\Phi_4(w)-1
\leq\sinh^2\!\left(\sqrt{x^2+y^2}\right)
=\sinh^2(|w|).
$
All rotated arguments in \eqref{eq:kernel_grouping} have modulus $|z|$. Averaging and using
$|z|^2\leq t_1t_2$ therefore yield
\begin{equation}
0\leq\Phi_{\mathcal M}(z)-1
\leq\sinh^2(|z|)
\leq\sinh^2(\sqrt{t_1t_2}).
\label{eq:psk_kernel_radial}
\end{equation}
Using the power series of $\sinh$ and the Cauchy--Schwarz inequality,
\begin{equation}
\begin{gathered}
\sinh(\sqrt{t_1t_2})
=\sum_{\iota=0}^{\infty}
\frac{t_1^{\iota+1/2}t_2^{\iota+1/2}}{(2\iota+1)!}\\
\leq
\left[
\left(\sum_{\iota=0}^{\infty}
\frac{t_1^{2\iota+1}}{(2\iota+1)!}\right)
\left(\sum_{\iota=0}^{\infty}
\frac{t_2^{2\iota+1}}{(2\iota+1)!}\right)
\right]^{1/2}\\
=\sqrt{\sinh(t_1)\sinh(t_2)}.
\end{gathered}
\label{eq:sinh_product_cs}
\end{equation}
Squaring \eqref{eq:sinh_product_cs} and combining it with \eqref{eq:psk_kernel_radial} proves the hyperbolic-sine bound. For $\mathcal M=4$, it is attained when $t_1=t_2$ and $z=t_1$. Finally,
$e^t-1-\sinh(t)=\cosh(t)-1\geq0$ for $t\geq0$,
which proves the exponential bound.

\section{Proof of Theorem~\ref{thm:exp}}
\label{app:exp}
We apply Lemma~\ref{lem:product} to every term of \eqref{eq:exactD}, accumulate the blocks through \eqref{eq:session_log}, and take the worst cell. For fixed $\tau$ and $\mathbf p_W\in\mathcal W_q$, let $z_{r,t}=\mu_{\tau,r}^{*}(\mathbf p_W)\mu_{\tau,t}(\mathbf p_W)/\sigma_w^2$. From \eqref{eq:rho_guarantee}, $|z_{r,t}|^2\leq\rho_{\tau,r,q}\rho_{\tau,t,q}$. Since $\sum_{r,t}\pi_{\tau,r}\pi_{\tau,t}=1$, applying Lemma~\ref{lem:product} with $t_1=\rho_{\tau,r,q}$ and $t_2=\rho_{\tau,t,q}$ to \eqref{eq:exactD} gives
\begin{align}
\chi_\tau^2(\mathbf p_W)
&=\sum_{r,t}\pi_{\tau,r}\pi_{\tau,t}\bigl[\Phi_{\mathcal M}(z_{r,t})-1\bigr]\nonumber\\
&\leq\sum_{r,t}\pi_{\tau,r}\pi_{\tau,t}\sinh(\rho_{\tau,r,q})\sinh(\rho_{\tau,t,q})\nonumber\\
&=\left(M_{\tau,q}^{\mathrm{sinh}}\right)^2\leq M_{\tau,q}^2,
\label{eq:exp_proof_chain}
\end{align}
where the last inequality is the last one in \eqref{eq:product_bound}. This proves \eqref{eq:exo_one}.

Moreover, every $\mathbf p_W\in\mathcal A_W$ lies in some cell $\mathcal W_q$, since the cells cover $\mathcal A_W$. For this cell, \eqref{eq:session_log} and \eqref{eq:exp_proof_chain} give
\begin{equation}
\ln\!\left(1+\chi_{\rm sess}^2(\mathbf p_W)\right)
\leq L\sum_{\tau=1}^{T}\ln\!\left[1+\left(M_{\tau,q}^{\mathrm{sinh}}\right)^2\right]
\leq C_{\mathrm{sinh}}.
\end{equation}
Finally, $\ln(1+x)\leq x$ for $x\geq0$ and $M_{\tau,q}^{\mathrm{sinh}}\leq M_{\tau,q}$ give $C_{\mathrm{sinh}}\leq C_{\mathrm E}$, which proves \eqref{eq:CE_chain}.

\section{Proof of Proposition~\ref{prop:separable}}
\label{app:separable}
We first establish the per-block bound, then its session counterpart, and finally the ordering $C_{\mathrm{sinh}}\leq C_{\mathrm S}$.

For $w=x+\mathrm jy$, the auxiliary kernel $\Phi_4$
defined in \eqref{eq:aux_qpsk_kernel} satisfies
\begin{equation}
\begin{aligned}
\Phi_4(w)
&=1+\sum_{\iota=1}^{\infty}
\frac{2^{2\iota-1}}{(2\iota)!}
(x^{2\iota}+y^{2\iota})\\
&\leq1+\sum_{\iota=1}^{\infty}
\frac{|w|^{2\iota}}{\iota!}
=e^{|w|^2},
\end{aligned}
\label{eq:aux_kernel_exp}
\end{equation}
where the inequality follows from
$2^{2\iota-1}\iota!\leq(2\iota)!$ and
$x^{2\iota}+y^{2\iota}\leq(x^2+y^2)^\iota$
for every integer $\iota\geq1$.
Applying \eqref{eq:aux_kernel_exp} to each term of \eqref{eq:kernel_grouping}, with $D_{\mathrm{rot}}=\mathcal M/4$, yields
\begin{equation}
\Phi_{\mathcal M}(z)
\leq\frac{1}{D_{\mathrm{rot}}}\sum_{d=0}^{D_{\mathrm{rot}}-1}
e^{|e^{\mathrm j2\pi d/\mathcal M}z|^2}
=e^{|z|^2}.
\label{eq:psk_kernel_exp}
\end{equation}
For $\mathbf p_W\in\mathcal W_q$,
\eqref{eq:rho_guarantee} gives
$|\mu_{\tau,r}^*\mu_{\tau,t}|^2/\sigma_w^4
\leq\rho_{\tau,r,q}\rho_{\tau,t,q}$.
Substituting into \eqref{eq:exactD},
\begin{equation}
\begin{aligned}
D_\tau(\mathbf p_W)
&\leq\sum_{r,t}\pi_{\tau,r}\pi_{\tau,t}
e^{\rho_{\tau,r,q}\rho_{\tau,t,q}}\\
&\leq\sum_{r,t}\pi_{\tau,r}\pi_{\tau,t}
e^{(\rho_{\tau,r,q}^2+\rho_{\tau,t,q}^2)/2}\\
&=\left[
\sum_{r=1}^{R}\pi_{\tau,r}
e^{\rho_{\tau,r,q}^2/2}
\right]^2,
\end{aligned}
\label{eq:psk_separable_factorization}
\end{equation}
where the second inequality uses $ab\leq(a^2+b^2)/2$. Taking logarithms proves \eqref{eq:dbar}. Accumulating the blocks and maximizing over the cells gives the corresponding session bound.  

To show $C_{\mathrm{sinh}}\leq C_{\mathrm S}$, let $\mathbb E[\cdot]$ denote the average over $\boldsymbol\pi_\tau$ and write $\rho=\rho_{\tau,\mathsf r,q}$. By the Cauchy--Schwarz inequality,
$(\mathbb E[\cosh\rho])^2 - (\mathbb E[\sinh\rho])^2=\mathbb E[e^\rho]\mathbb E[e^{-\rho}]\geq1$.
Thus,
$1+(M_{\tau,q}^{\mathrm{sinh}})^2
\leq(\mathbb E[\cosh\rho])^2$.
Since $\cosh\rho\leq e^{\rho^2/2}$,
\begin{equation}
\begin{aligned}
\ln\!\left[1+(M_{\tau,q}^{\mathrm{sinh}})^2\right]\leq2\ln\mathbb E[\cosh\rho]\leq2\ln\mathbb E[e^{\rho^2/2}]=\bar d_{\tau,q}.
\end{aligned}
\label{eq:sinh_separable_ordering}
\end{equation}
Summing over the blocks and maximizing over the cells gives $C_{\mathrm{sinh}}\leq C_{\mathrm S}$. Together with Theorem~\ref{thm:exp}, this proves \eqref{eq:separable_session}.

\section{Proof of Proposition~\ref{prop:precoding_convex}}
\label{app:precoding_cones}
First, assume that \eqref{eq:C2_conic} holds. Since \eqref{eq:C2_conic_a} is imposed at every reference direction, it implies $A_{\tau,r,q}\leq a_{\tau,r,q}$, including the channel-model margin when present. The rotated cone \eqref{eq:C2_conic_b} is equivalent to $a_{\tau,r,q}^2\leq\sigma_{w,\min}^2b_{\tau,r,q}$, so that $\rho_{\tau,r,q}=A_{\tau,r,q}^2/\sigma_{w,\min}^2\leq b_{\tau,r,q}$. The exponential cone \eqref{eq:C2_conic_c} is equivalent to $e^{b_{\tau,r,q}}-1\leq w_{\tau,r,q}$, and since $e^x$ is increasing, $e^{\rho_{\tau,r,q}}-1\leq w_{\tau,r,q}$. As the probabilities are nonnegative, \eqref{eq:C2_conic_d} then gives
\begin{equation}
M_{\tau,q}=\sum_{r=1}^{R}\pi_{\tau,r}^{(i)}\bigl(e^{\rho_{\tau,r,q}}-1\bigr)\leq\sum_{r=1}^{R}\pi_{\tau,r}^{(i)}w_{\tau,r,q}\leq t_{\tau,q}.
\end{equation}
Finally, \eqref{eq:C2_conic_e} is equivalent to $L\|\mathbf t_q\|^2\leq C$. Since $0\leq M_{\tau,q}\leq t_{\tau,q}$, this yields $L\sum_{\tau}M_{\tau,q}^2\leq C$ for every $q$, i.e., $\mathrm C_2$.

Conversely, assume that $\mathrm C_2$ holds and choose $a_{\tau,r,q}=A_{\tau,r,q}$,
$b_{\tau,r,q}=\rho_{\tau,r,q}$,
$w_{\tau,r,q}=e^{\rho_{\tau,r,q}}-1$, and
$t_{\tau,q}=M_{\tau,q}$.
These choices are nonnegative and satisfy \eqref{eq:C2_conic_a}--\eqref{eq:C2_conic_d}, with equality in \eqref{eq:C2_conic_a} for the maximizing reference direction, while \eqref{eq:C2_conic_e} reduces to $\mathrm C_2$. This completes the proof.

\section{Proof of Proposition~\ref{prop:probability_convex}}
\label{app:probability_cones}
Since $\pi_{\tau,r}\geq\pi_{\min}>0$, the definition \eqref{eq:expcone} shows that \eqref{eq:entropy_conic_a} is equivalent to $\pi_{\tau,r}e^{-h_{\tau,r}/\pi_{\tau,r}}\leq1$, i.e., to $h_{\tau,r}\geq\pi_{\tau,r}\ln\pi_{\tau,r}$. First, assume that \eqref{eq:entropy_conic} holds. Combining this inequality with \eqref{eq:entropy_conic_b} and \eqref{eq:entropy_conic_c} gives
\begin{equation}
\begin{aligned}
L\sum_{\tau=1}^{T}\sum_{r=1}^{R}
\pi_{\tau,r}\ln\pi_{\tau,r}
&\leq-n_{\mathrm{obs}}\bar H\ln2,\\
\sum_{r=1}^{R}\pi_{\tau,r}\ln\pi_{\tau,r}
&\leq-H_{\mathrm{floor}}\ln2,\quad\forall\tau.
\end{aligned}
\label{eq:entropy_bounds_appendix}
\end{equation}
Since $H(\boldsymbol\pi_\tau)=-\sum_r\pi_{\tau,r}\ln\pi_{\tau,r}/\ln2$, these are exactly $\mathrm C_5$ and $\mathrm C_6$.
Conversely, if $\mathrm C_5$ and $\mathrm C_6$ hold, choosing $h_{\tau,r}=\pi_{\tau,r}\ln\pi_{\tau,r}$ satisfies \eqref{eq:entropy_conic_a} with equality, and \eqref{eq:entropy_conic_b} and \eqref{eq:entropy_conic_c} reduce to $\mathrm C_5$ and $\mathrm C_6$. This completes the proof.

\bibliographystyle{IEEEtran}
\bibliography{references}
\end{document}